\RequirePackage{etoolbox}
\csdef{input@path}%
{%
 {sty/},
 {img/},
}
\documentclass{elsevierbook}

\usepackage{natbib}
\citestyle{elsarticle-num}  

%
\usepackage{amssymb}
\usepackage{amsfonts}
\usepackage{mathrsfs}
\usepackage{graphicx}
\usepackage{hyperref}
\usepackage{cancel}
\usepackage{graphicx}
\usepackage{array}
\newcolumntype{P}[1]{>{\centering\arraybackslash}p{#1}}

\usepackage{xhfill}
\usepackage{ulem}

\newcommand{\quotes}[1]{``#1''}

\usepackage{units}
\newcommand{\X}{\mathbb{X}}

\makeatletter
\newcommand{\vast}{\bBigg@{4}}
\newcommand{\Vast}{\bBigg@{5}}
\makeatother

\newtheorem{thm}{Theorem}
\newtheorem{theorem}[thm]{Theorem}
\newtheorem{lemma}[thm]{Lemma}
\newtheorem{definition}[thm]{Definition}

\newtheorem{corollary}[thm]{Corollary}

\begin{document}

  \include{faketableofcontents}

  \chapter{Information Geometry and Classical Cram\'{e}r-Rao Type Inequalities}\label{chap1}
\subchapter{Kumar Vijay Mishra$^\dag$ and M. Ashok Kumar$^\ddag$\\ \small{$^\dag$United States CCDC Army Research Laboratory, Adelphi, MD 20783 USA\\$^\ddag$Department of Mathematics, Indian Institute of Technology Palakkad, 678557 India}}

\minitoc
\begin{abstract}
We examine the role of information geometry in the context of classical Cram\'er-Rao (CR) type inequalities. In particular, we focus on Eguchi's theory of obtaining dualistic geometric structures from a divergence function and then applying Amari-Nagoaka's theory to obtain a CR type inequality. The classical deterministic CR inequality is derived from Kullback-Leibler (KL)-divergence. We show that this framework could be generalized to other CR type inequalities through four examples: $\alpha$-version of CR inequality, generalized CR inequality, Bayesian CR inequality, and Bayesian $\alpha$-CR inequality. These are obtained from, respectively, $I_\alpha$-divergence (or relative $\alpha$-entropy), generalized Csisz\'ar divergence, Bayesian KL divergence, and Bayesian $I_\alpha$-divergence.
\end{abstract}

\begin{keywords}
\kwd{Bayesian bounds}
\kwd{Cram\'{e}r-Rao lower bound}
\kwd{R\'enyi entropy}
\kwd{Fisher metric}
\kwd{relative $\alpha$-entropy}
\end{keywords}


\section{Introduction}
\label{sec:introduction}
Information geometry   is   a   study   of   statistical   models   (families of probability distributions) from a   Riemannian   geometric   perspective. In this framework, a statistical model plays the role of a manifold. Each point on the manifold is a probability distribution from the model. In a historical development, Prof. C R Rao introduced this idea in his seminal 1945 paper \cite[Secs.~6,7]{rao1945information}. He also proposed Fisher information as a Riemannian metric on a statistical manifold as follows: Let $\mathcal{P}$ be the space of all probability distributions (strictly positive) on a state space $\mathbb{X}$. Assume that $\mathcal{P}$ is parametrized by a coordinate system $\theta$. Then, the {\em Fisher metric} at a point $p_\theta$ of $\mathcal{P}$ is 
\begin{align}
\label{eqn:fisher_matrix}
  g_{i,j}(\theta) := \langle \partial_i,\partial_j\rangle_{p_\theta} & := \int \frac{\partial}{\partial\theta_i} p_\theta(x)\cdot\frac{\partial}{\partial\theta_j}\log p_\theta(x)\, dx\\ 
  & = \label{eqn:derivative_KL-div} -\frac{\partial}{\partial\theta_i}\frac{\partial}{\partial\theta_j'}I(p_{\theta},p_{\theta'})\bigg|_{\theta=\theta'},
\end{align}
where $I(p_{\theta},p_{\theta'})$ is the Kullback-Leibler (KL)-divergence between $p_{\theta}$ and $p_{\theta'}$ (or {\em entropy of $p_{\theta}$ relative to $p_{\theta'}$}). Rao called the space based on such a metric a Riemann space and the geometry associated with this as the Riemannian geometry with its definitions of length, distance, and angle. 

Since then, information geometry has widely proliferated through several substantial contributions, for example, Efron \cite{Efron1975curvature}, Cencov \cite{cencov1981statistical}, Amari \cite{amari1982curved}, \cite{amari1985differential}, Amari and Nagoaka \cite{amari2000methods}, and Eguchi \cite{eguchi1992geometry}. Information-geometric concepts have garnered considerable interest in recent years with a wide range of books by Amari \cite{amari2016information}, Ay et al. \cite{ay2017information}, Ay et al. \cite{ay2018information}, Barndorff-Nielsen \cite{barndorff2014information}, Calin and Udri{\c{s}}te \cite{calin2014geometric}, Kass and Vos \cite{kass2011geometrical}, Murray and Rice \cite{murray2017differential}, Nielsen \cite{nielsen2021progress}, Nielsen and Bhatia \cite{nielsen2013matrix}, and Nielsen et al. \cite{nielsen2017computational}. This perspective is helpful in analyzing problems in engineering and sciences where parametric probability distributions are used, including (but not limited to) robust estimation of co-variance matrices \cite{balaji2014information}, optimization \cite{amari2013minkovskian}, signal processing \cite{amari2016information}, neural networks \cite{amari1997information,amari2002information}, machine learning \cite{amari1998natural}, optimal transport \cite{gangbo1996geometry}, quantum information \cite{grasselli2001uniqueness}, radar systems \cite{de2014design,barbaresco2008innovative}, communications \cite{coutino2016direction}, computer vision \cite{maybank2012fisher}, and covariant thermodynamics \cite{barbaresco2014koszul,barbaresco2016geometric}. More recently, several developments in deep learning \cite{desjardins2015natural,roux2008topmoumoute} that employ various approximations to the Fisher information matrix (FIM) to calculate the gradient descent have incorporated information-geometric concepts.

\begin{figure}[ht]
\includegraphics[width=0.35\textwidth]{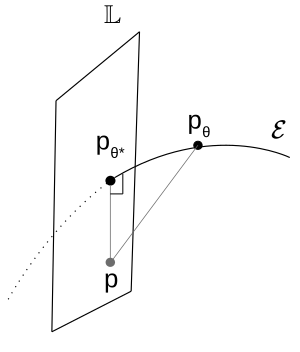}
\caption{Orthogonality of exponential and linear families}
\label{fig:E_and_L}
\end{figure}

We are aware of two strong motivations for studying information geometry. The first is the following. The pair of statistical models, namely linear and exponential families of probability distributions, play an important role in information geometry. These are {\em dually flat} in the sense that the former is flat with respect to the m-connection and the later is flat with respect to the e-connection and the two connections are dual to each other with respect to Fisher metric (see \cite[Sec.~2.3 and Ch.~3]{amari2000methods}). We refer the reader to \cite{kurose1994flat} and \cite{matsuzoe1998flat} for further details on the importance of dualistic structures in Riemannian geometry. A close relationship between the linear and exponential families were known even without Riemannian geometry. These two families were shown to be \quotes{orthogonal} to each other in the sense that an exponential family intersects with the associated linear family in a single point at right angle, that is, a Pythagorean theorem with respect to the KL-divergence holds at the point of intersection (See Fig.~\ref{fig:E_and_L}). This is interesting as it enables one to turn the problem of {\em maximum likelihood estimation} (MLE) on an exponential family into a problem of solving a set of linear equations \cite[Th.~3.3]{csiszar2004information}. This fact was extended to generalised exponential families and convex integral functionals (which includes Bregman divergences) by Csisz\'ar and Mat\'u\v{s} \cite[Sec.~4]{CsiszarM12J}. An analogous fact was shown from a Riemannian geometric perspective for U-divergences (a special form of Bregman divergences) and U-models (Student distributions are a special case) by Eguchi et al. \cite{EguchiKO14J}. A similar orthogonality relationship between power-law and linear families with respect to the $I_\alpha$-divergence (or relative $\alpha$-entropy) was established in \cite{kumar2015minimization-1}.

\begin{figure}[ht]
\includegraphics[width=0.5\textwidth]{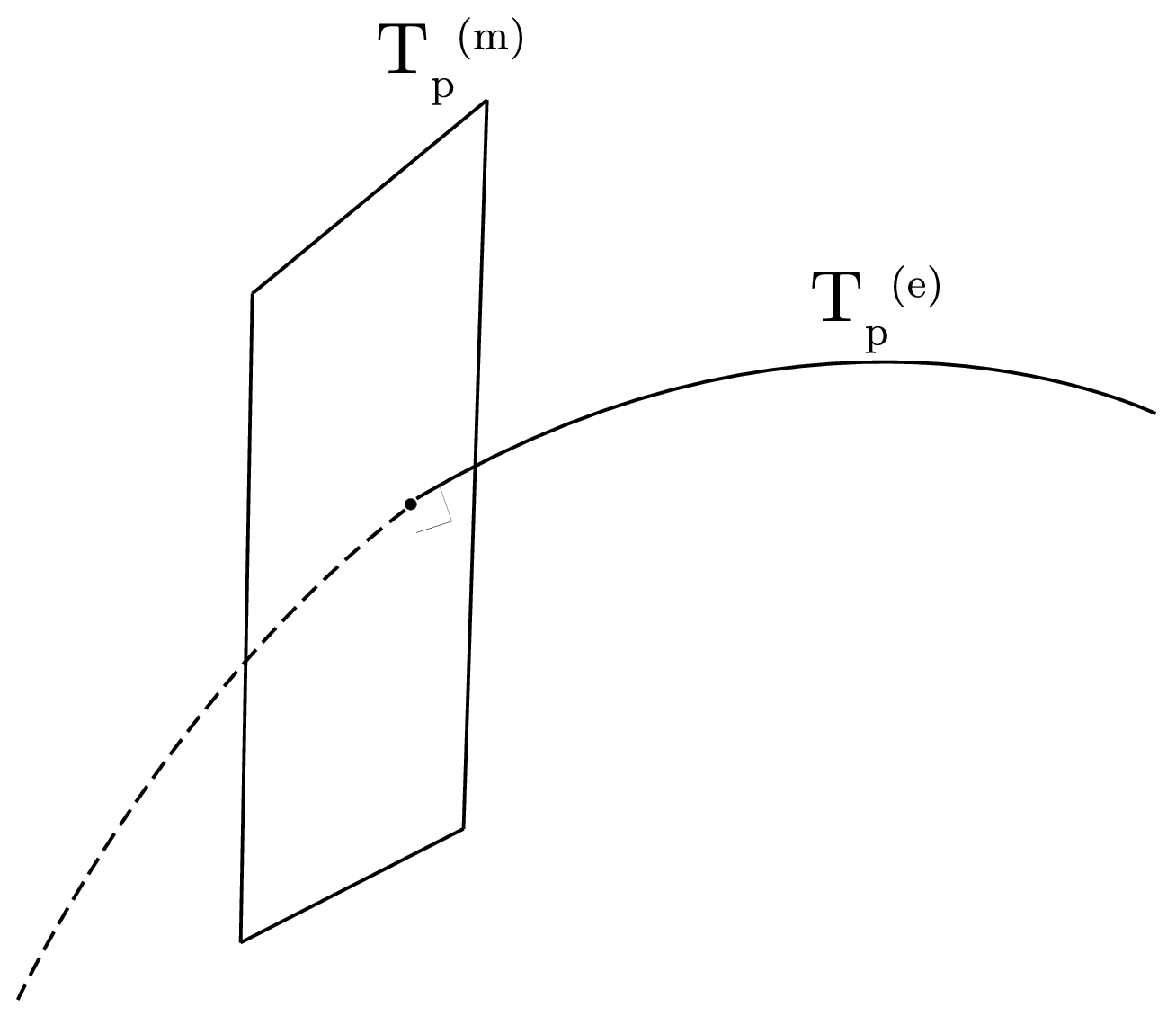}
\caption{Orthogonality of $T_p^{(m)}$ and $T_p^{(e)}$}
\label{fig:orthogonal}
\end{figure}

The second motivation for information geometry (and also for this chapter) comes from the works of Amari and Nagoaka \cite[Sec.~2.5]{amari2000methods}. Apart from showing that the $e$ and $m$ connections are dual to each other with respect to the Fisher metric, they also define, at every point $p$ of a manifold $S$, a pair of spaces of vectors $T_p^{(m)}$ and $T_p^{(e)}$ and show that $T_p^{(m)}$ is flat with respect to the $m$-connection and $T_p^{(e)}$ is flat with respect to the $e$-connection and are \quotes{orthogonal} to each other with respect to the Fisher metric (see Fig.~\ref{fig:orthogonal}). Also the Fisher metric in (1) for two tangent vectors $X$ and $Y$ can be given  by $\langle X, Y\rangle_p = \langle X^{(m)},Y^{(e)}\rangle_p$, where $X^{(m)}\in T_p^{(m)}, Y^{(e)}\in T_p^{(e)}$. They show that, for a smooth function $f:S\to\mathbb{R}$,
\begin{equation}
\label{eqn:norm_differential}
\|(df)_p\|_p^2 = (\partial_i f)_p (\partial_j f)_p g^{i,j}(p),
\end{equation}
where $g^{i,j}$ are the entries of the inverse of the FIM defined in \eqref{eqn:fisher_matrix}. This enables them to show that, for a random variable $A:\mathbb{X}\to \mathbb{R}$,
\begin{equation}
V_p[A] = \|(dE[A])_p\|_p^2.
\end{equation}
where $E[A]: \mathcal{P}\to \mathbb{R}$ maps $p\mapsto E_p[A]$, the expectation of $A$ with respect to $p$ and $V_p[A]$, the variance \cite[Th.~2.8]{amari2000methods}. This is interesting as this connects Riemannian geometry and statistics (as the left hand side is a statistical quantity and the right side is a differential geometric quantity). The above, when applied to a sub-manifold $S$ of $\mathcal{P}$, becomes
\begin{equation}
\label{eqn:var_diff_relation}
V_p[A] \ge \|(dE[A])_p\|_p^2.
\end{equation}
Now, if $\widehat{\theta}=(\widehat{\theta_1},\dots, \widehat{\theta_k})$ is an unbiased estimator of $\theta=(\theta_1,\dots, \theta_k)$ (assuming that $S$ is a $k$-dimensional manifold), then applying \eqref{eqn:var_diff_relation} to $A = \sum_i c_i\widehat{\theta_i}$ for $c=(c_1,\dots,c_k)^T$, we get the classical Cram\'er-Rao lower bound (CRLB)
\begin{equation}
    \label{eqn:cramer_Rao_classical}
    c^T V_\theta(\widehat{\theta})c\ge c^T G(\theta)^{-1} c,
\end{equation}
where $V_\theta(\widehat{\theta})$ is the covariance matrix of $\widehat{\theta}$ and $G(\theta)$ is the FIM. This is one among several ways of proving the Cram\'er-Rao (CR) inequality. This is interesting from a divergence function point of view as Fisher metric and the $e$ and $m$ connections can be derived from the KL-divergence. Indeed, Eguchi \cite{eguchi1992geometry} proved that, given a (sufficiently smooth) divergence function, one can always come up with a metric and a pair of affine conections so that this triplet forms a dualistic structure on the underlying statistical manifold. In this chapter, we first apply Eguchi's theory to the $I_\alpha$-divergence and come up with a dualistic structure of a metric and a pair of affine connections. Subsequently, we apply Amari and Nagoaka's above mentioned theory to establish an $\alpha$-version of the Cramer-Rao inequality. We then extend this to generalised Csisz\'ar divergences and obtain a generalised Cramer-Rao inequality. We also establish the Bayesian counterparts of the $\alpha$-Cramer-Rao inequality and the usual one by defining the appropriate divergence functions. 
    
\section{$I$-divergence and $I_\alpha$-divergence}
\label{subsec:intro_relent}
In this section, we introduce $I_\alpha$-divergence and its connection to Csisz\'ar divergences. We restrict ourselves to finite state space $\X$. However, all these may be extended to continuous densities using analogous functional analytic tools (see our remark on infinite $\X$ in subsection \ref{subsec:infinite}). 

The $I$-divergence between two probability distributions $p$ and $q$ on a finite state space, say $\X = \{0,1,2,\dots,M\}$, is defined as
\begin{align}
\label{eqn:rel_ent}
    {I}(p,q) := \sum_{x\in\mathbb{X}} p(x)\log p(x) - \sum_{x\in\mathbb{X}} p(x) \log q(x),
\end{align}
where 
\begin{align}
\label{eq:shannon_ent}
    H(p) := -\sum_{x\in\mathbb{X}} p(x)\log p(x)
\end{align}
is the {\em Shannon entropy} and
\begin{align}
\label{eq:shannon_cross_ent}
    D(p\|q) := -\sum_{x\in\mathbb{X}} p(x)\log q(x)
\end{align}
is the {\em cross-entropy}. Throughout the chapter, we shall assume that all probability distributions have common support $\mathbb{X}$.
  
There are other measures of uncertainty that are used as alternatives to Shannon entropy. One of these is the R\'{e}nyi entropy that was discovered by Alfred R\'enyi while attempting to find an axiomatic characterization to measures of uncertainty \cite{renyi1961measures}. Later, Campbell gave an operational meaning to R\'enyi entropy  \cite{campbell1965coding}; he showed that R\'enyi entropy plays the role of Shannon entropy in a source coding problem where normalized cumulants of compressed lengths are considered. Blumer and McEliece \cite{blumer1988renyi} and Sundaresan \cite{sundaresan2007guessing} studied the mismatched (source distribution) version of this problem and showed that $I_\alpha$-divergence plays the role of $I$-divergence in this problem. The R\'{e}nyi entropy of $p$ of order $\alpha$, $\alpha \ge 0$, $\alpha\neq 1$, is defined as 
\[
H_{\alpha}(p) := \frac{1}{1-\alpha}\log\sum_x p(x)^{\alpha}.
\]
{\em $I_\alpha$-divergence} (also known as {\em Sundaresan's divergence} \cite{200206ISIT_Sun}) between two probability distributions $p$ and $q$ is defined as

\begin{align}
\label{eqn:alphadiv_compress}
\lefteqn{I_{\alpha}(p,q)}\nonumber\\
& := \frac{1}{1-\alpha}\log \sum_x p(x)\left(\frac{q(x)}{\|q\|_{\alpha}}\right)^{\alpha-1}  - \frac{1}{\alpha(1-\alpha)}\log \sum_x p(x)^{\alpha}\\\label{eqn:alphadiv_alt}
 & = \frac{1}{1-\alpha} \log \sum_x p(x) q(x)^{\alpha-1} + \frac{1}{\alpha}\log \sum_x q(x)^{\alpha} - \frac{1}{\alpha(1-\alpha)}\log \sum_x p(x)^{\alpha}.\nonumber\\
\end{align}


The first term in (\ref{eqn:alphadiv_compress}) is called the {\em Renyi cross-entropy} and is to be compared with the first term of (\ref{eqn:rel_ent}). It should be noted that, as $\alpha \rightarrow 1$, we have $I_{\alpha}(p,q) \rightarrow I(p,q)$ and $H_{\alpha}(p) \rightarrow H(p)$ \cite{kumar2015minimization-1}. R\'enyi entropy and $I_\alpha$-divergence are related by the equation $I_\alpha(p,u) = \log |\mathbb{X}| - H_\alpha(p)$. 
   
The ubiquity of R\'enyi entropy and $I_\alpha$-divergence in information theory was further noticed, for example, in guessing problems by Ar{\i}kan \cite{arikan1996inequality}, Sundaresan \cite{sundaresan2007guessing}, and Huleihel et al. \cite{huleihel2017guessing}; and in encoding of tasks by Bunte and Lapidoth \cite{bunte2014codes}. $I_\alpha$-divergence arises in statistics as a generalized likelihood function robust to outliers \cite{jones2001comparison}, \cite{kumar2015minimization-2}. It has been referred variously as $\gamma$-divergence \cite{fujisawa2008robust,cichocki2010families,notsu2014spontaneous}, projective power divergence \cite{eguchi2011projective,eguchi2010entropy}, logarithmic density power divergence \cite{basu2011statistical} and relative $\alpha$-entropy \cite{200206ISIT_Sun}, \cite{kumar2015minimization-1}. Throughout this chapter, we shall follow the nomenclature of $I_\alpha$-divergence.
  
$I_\alpha$-divergence shares many interesting properties with $I$-divergence (see, e.g. \cite[Sec.~II]{kumar2015minimization-1} for a summary of its properties and relationships to other divergences). For instance, analogous to $I$-divergence, $I_\alpha$-divergence behaves like squared Euclidean distance and satisfies a Pythagorean property \cite{kumar2015minimization-1,kumar2018information}. The Pythagorean property proved useful in arriving at a computation scheme \cite{kumar2015minimization-2} for a robust estimation procedure \cite{fujisawa2008robust}.

\subsection{Extension to Infinite $\mathbb{X}$}
\label{subsec:infinite}
The Cramer-Rao type inequalities discussed in this chapter are obtained by applying Eguchi's theory \cite{eguchi1992geometry} followed by Amari-Nagaoka's framework \cite[Sec.~2.5]{amari2000methods}. While the former is applicable even for infinite $\mathbb{X}$, the latter \cite[Sec.~2.5]{amari2000methods} is applicable only for the finite case. This is a limitation on the applicability of the established bounds. Several works, notably Pistone  \cite{Pistone1995Annals,Pistone2007Annals} have made significant contributions in this direction; see also \cite{Amari2021Information}, \cite{ay2017information} for further details. A more interesting case from the applications perspective is when $\mathbb{X}$ is infinite and $S$ is finite-dimensional. It follows from the concluding remarks of Amari \cite[Sec.~2.5]{amari2000methods} and via personal communication (dated 29 June 2021) with Prof. Nagaoka that the arguments of \cite[Sec.~2.5]{amari2000methods} would still ``apply in its essence''. However, the formulation of these arguments in a mathematically rigorous way in the framework of infinite-dimensional differential geometry on $\mathcal{P}(\mathbb{X})$ is worth investigating. 

\subsection{Bregman vs Csisz\'{a}r}
\label{sec:bregman}
Bregman and Csisz\'ar are two classes of divergences with the $I$-divergence at their intersection. Our primary interest in this chapter is the geometry of $I_\alpha$-divergence. This divergence differs from, but is related to, the usual R\'enyi divergence which is a member of Csisz\'ar family. However, $I_\alpha$-divergence is not a member of the Csisz\'ar family. Instead, it falls under a \textit{generalised} form of Csisz\'ar $f$-divergences, whose 
geometry is different from that of Bregman and Csisz\'ar divergences  \cite{zhang2004divergence}. In particular, $I_\alpha$-divergence is closely related to the Csisz\'ar $f$-divergence $D_{f}$ as
\begin{equation}
\label{eqn:the-function-in_D_f}
    I_{\alpha}(p,q) = \frac{1}{1-\alpha} \log\left[ \text{sgn}(1-\alpha) \cdot D_{f}(p^{(\alpha)},q^{(\alpha)}) + 1\right],
\end{equation}
where \par\noindent\small
\begin{equation*}
    p^{(\alpha)}(x) := \frac{p(x)^{\alpha}}{\sum_y {p(y)}^{\alpha}}, q^{(\alpha)}(x) := \frac{q(x)^{\alpha}}{\sum_y {q(y)}^{\alpha}}, f(u) = \text{sgn}(1-\alpha) \cdot (u^{{1}/{\alpha}} - 1), u \geq 0
\end{equation*}\normalsize
[c.f. \cite[Sec.~II]{kumar2015minimization-1}]. The measures $p^{(\alpha)}$ and $q^{(\alpha)}$ are called {\em $\alpha$-escort} or {\em $\alpha$-scaled} measures \cite{tsallis1998role}, \cite{karthik2018on}. Observe from \eqref{eqn:the-function-in_D_f} that $I_\alpha$-divergence is a monotone function of the Csisz\'ar divergence, not between $p$ and $q$, but their escorts $p^{(\alpha)}$ and $q^{(\alpha)}$. For a strictly convex function $f$ with $f(1) = 0$, the Csisz\'ar $f$-divergence between two probability distributions $p$ and $q$ is defined as (also, see \cite{csiszar1991why})
\begin{equation*}
  D_f(p,q) = \sum_x q(x) f\left(\frac{p(x)}{q(x)}\right).
\end{equation*}
Note that the right side of (\ref{eqn:the-function-in_D_f}) is R\'enyi divergence between $p^{(\alpha)}$ and $q^{(\alpha)}$ of order ${1}/{\alpha}$ \cite{kumar2015minimization-2}. For an extensive study of properties of the R\'enyi divergence, we refer the reader to \cite{vanerven2014renyi}. The Csisz\'ar $f$-divergence is further related to the Bregman divergence $B_f$ through
\begin{align}
    D_f(p, q) = \sum_x p(x) B_f({q(x)}/{p(x)},1),
\end{align}
 \cite{zhang2004divergence}. $I_\alpha$-divergence differs from both Csisz\'ar and Bregman divergences because of the appearance of the escort distributions in (\ref{eqn:the-function-in_D_f}).

\subsection{Classical vs Quantum CR inequality}
\label{sec:qcrlb}
This chapter is concerned with the \underline{classical} CR inequality to differentiate it with its \underline{quantum} counterpart. In quantum metrology, the choice of measurement affects the probability distribution obtained. The implication of this effect is that the classical FIM becomes a function of measurement. In general, there may not be any measurement to attain the resulting quantum FIM \cite{braunstein1994statistical}. There are many quantum versions of classical FIM, e.g. based on the symmetric, left, and right derivatives. Petz \cite{petz1996monotone,petz2007quantum} showed that all quantum FIMs are a member of a family of Riemannian monotone metrics. Further, all quantum FIMs yield quantum CR inequalities with different achievabilities \cite{liu2019quantum}. Quantum algorithms to estimate von Neumann's entropy and $\alpha$-R\'{e}nyi entropy of quantum states (with Hartley, Shannon, and collision entropies as special cases for $\alpha=0$, $\alpha=1$, and $\alpha=2$, respectively) have also been reported  \cite{li2018quantum}. For geometric structure induced from a quantum divergence, we refer the reader to \cite[Chapter 7]{amari2000methods}.

\section{Information Geometry from a Divergence Function}
\label{sec:desiderata}
In this section, we summarize the information-geometric concepts associated with a general divergence function. For detailed mathematical definitions, we refer the reader to Amari and Nagoaka \cite{amari2000methods}. For more intuitive explanations of information-geometric notions, one may refer to Amari's recent book \cite{amari2016information}. We shall introduce the reader to a certain dualistic structure on a statistical manifold of probability distributions arising from a divergence function. For a detailed background on differential and Riemannian geometry, we refer the reader to \cite{spivak2005comprehensive,jost2005riemannian,gallot2004riemannian,docarmo1976differential}.
  
In information geometry, statistical models play the role of a manifold and the FIM and its various generalizations play the role of a Riemannian metric. A {\em statistical manifold} is a parametric family of probability distributions on $\mathbb{X}$ with a \quotes{continuously varying} parameter space $\Theta$ (statistical model). A statistical manifold $S$ is usually represented by $S = \{p_{\theta}: \theta = (\theta_1,\dots,\theta_n)\in \Theta\subset\mathbb{R}^n\}$. Here, $\theta_1,\dots,\theta_n$ are the coordinates of the point $p$ in $S$ and the mapping $p \mapsto (\theta_1(p),\dots,\theta_n(p))$ that takes a point $p$ to its coordinates constitute a {\em coordinate system}. The \quotes{dimension} of the parameter space is the dimension of the manifold. For example, the set of all binomial probability distributions $\{B(r,\theta):\theta\in (0,1)\}$, where $r$ is the (known) number of trials, is a one-dimensional statistical manifold. Similarly, the family of normal distributions $S = \{N(\mu,\sigma^2) : \mu\in \mathbb{R}, \sigma^2 > 0\}$ is a two dimensional statistical manifold. The {\em tangent space} at a point $p$ on a manifold $S$ (denoted $T_p(S)$) is a linear space that corresponds to the \quotes{local linearization} of the manifold around the point $p$. The elements of $T_p(S)$ are called {\em tangent vectors} of $S$ at $p$. For a coordinate system $\theta$, the (standard) basis vectors of a tangent space $T_p$ are denoted by $(\partial_i)_p := \left({\partial}/{\partial\theta_i}\right)_p, i=1,\dots,n$. A {\em (Riemannian) metric} at a point $p$ is an inner product defined for any pair of tangent vectors of $S$ at $p$. 

A metric is completely characterized by the matrix whose entries are the inner products between the basic tangent vectors. That is, it is characterized by the matrix
  \[
  G(\theta) = [g_{i,j}(\theta)]_{i,j = 1,\dots,n},
  \]
  where $g_{i,j}(\theta) := \langle \partial_i, \partial_j\rangle $. An {\em affine connection} (denoted $\nabla$) on a manifold is a correspondence between the tangent vectors at a point $p$ to the tangent vectors at a \quotes{nearby} point $p'$ on the manifold. An affine connection is completely specified by specifying the $n^3$ real numbers $(\Gamma_{ij,k})_p, i,j,k=1,\dots,n$ called the {\em connection coefficients} associated with a coordinate system $\theta$.

Let us restrict to statistical manifolds defined on a finite set $\mathbb{X} = \{a_1,\dots,a_d\}$.  Let $\mathcal{P} := \mathcal{P}(\mathbb{X})$ denote the space of all probability distributions on $\mathbb{X}$. Let $S\subset\mathcal{P}$ be a sub-manifold. Let $\theta = (\theta_1,\dots,\theta_k)$ be a parameterization of $S$. Let $D$ be a divergence function on $S$. By a {\em divergence}, we mean a non-negative function $D$ defined on $S\times S$ such that $D(p,q) = 0$ iff $p=q$.  Let $D^*$ be another divergence function defined by $D^*(p,q) = D(q,p)$. Given a (sufficiently smooth) divergence function on $S$, Eguchi \cite{eguchi1992geometry} defines a Riemannian metric on $S$ by the matrix
\[
G^{(D)}(\theta) = \left[g_{i,j}^{(D)}(\theta)\right],
\]
where
\begin{align*}
  g_{i,j}^{(D)}(\theta) := -D[\partial_i,\partial_j] :=  -\frac{\partial}{\partial\theta_j'}\frac{\partial}{\partial\theta_i}D(p_{\theta},p_{\theta'})\bigg|_{\theta=\theta'}
\end{align*}
where $g_{i,j}$ is the elements in the $i$th row and $j$th column of the matrix $G$, $\theta = (\theta_1,\dots,\theta_n)$, $\theta' = (\theta_1',\dots,\theta_n')$, and dual affine connections $\nabla^{(D)}$ and $\nabla^{(D^*)}$, with connection coefficients described by following Christoffel symbols
\begin{align*}
  \Gamma_{ij,k}^{(D)}(\theta) := -D[\partial_i\partial_j,\partial_k]
  := -\frac{\partial}{\partial\theta_i}\frac{\partial}{\partial\theta_j}\frac{\partial}{\partial\theta_k'}D(p_{\theta},p_{\theta'})\bigg|_{\theta=\theta'}
\end{align*}
and
\begin{align*}
  \Gamma_{ij,k}^{(D^*)}(\theta) & := & -D[\partial_k,\partial_i\partial_j] :=  -\frac{\partial}{\partial\theta_k}\frac{\partial}{\partial\theta_i'}\frac{\partial}{\partial\theta_j'}D(p_{\theta},p_{\theta'})\bigg|_{\theta=\theta'},
\end{align*}
such that $\nabla^{(D)}$ and $\nabla^{(D^*)}$ are duals of each other with respect to the metric $G^{(D)}$ in the sense that
\begin{align}
  \label{dualistic-structure}
  \partial_k g_{i,j}^{(D)}=\Gamma_{ki,j}^{(D)}+ \Gamma_{kj,i}^{(D^*)}.
\end{align}

When $D(p,q) = I(p,q)$, the resulting metric is called the {\em Fisher information metric} given by $G(\theta) = [g_{i,j}(\theta)]$ with
  \begin{align}
  \nonumber
  g_{i,j}(\theta)
  & = \left. - \frac{\partial}{\partial \theta_i} \frac{\partial}{\partial \theta'_j} \sum_{x} p_{\theta}(x) \log \frac{p_{\theta}(x)}{p_{\theta'}(x)} \right|_{\theta' = \theta}\\
  \nonumber
  & = \sum_x \partial_i p_{\theta}(x)\cdot \partial_j \log p _{\theta}(x)\\
  \nonumber
  & = E_{\theta}[\partial_i \log p_{\theta}(X)\cdot \partial_j \log p_{\theta}(X)]\\
  \label{eqn:fisher-information-metric-deterministic}
  & = \text{Cov}_{\theta}[\partial_i \log p_{\theta}(X), \partial_j \log p_{\theta}(X)].
  \end{align}
  The last equality follows from the fact that the expectation of the score function is zero, that is, $E_{\theta}[\partial_i \log p_{\theta}(X)] = 0, i = 1, \dots, n$. The affine connection $\nabla^{(I)}$ is called the {\em $m$-connection} (mixture connection) with connection coefficients 
  \begin{align*}
    \Gamma_{ij,k}^{(m)}(\theta)
    = \sum_x \partial_i \partial_j p_{\theta}(x)\cdot \partial_k \log p _{\theta}(x)
  \end{align*}
   and is denoted $\nabla^{(m)}$. The affine connection $\nabla^{(I^*)}$ is called the {\em $e$-connection} (exponential connection) with connection coefficients
  \begin{align*}
    \Gamma_{ij,k}^{(e)}(\theta)
    = \sum_x \partial_k p _{\theta}(x)\cdot \partial_i \partial_j \log p _{\theta}(x)
  \end{align*}
  and is denoted $\nabla^{(e)}$   \cite[Sec. 3.2]{amari2000methods}).

\subsection{Information Geometry for \texorpdfstring{$\alpha$}{}-CR inequality}
\label{subsec:alphaCRLB}  
  Set $D = I_{\alpha}$ and apply the Eguchi framework. For simplicity, write $G^{(\alpha)}$ for $G^{(I_{\alpha})}$. The Riemannian metric on $S$ is specified by the matrix $G^{(\alpha)}(\theta) = [g_{i,j}^{(\alpha)}(\theta)]$, where\par\noindent\small
  \begin{align}
  \lefteqn{g_{i,j}^{(\alpha)}(\theta) ~ := ~ g_{i,j}^{(I_{\alpha})} } \nonumber\\
  & = -\frac{\partial}{\partial\theta_j'}\frac{\partial}{\partial\theta_i}I_{\alpha}(p_{\theta},p_{\theta'})\bigg|_{\theta' = \theta} \nonumber\\\label{eqn:alpha-metric}\\
    & = \frac{1}{\alpha-1}\cdot\frac{\partial}{\partial\theta_j'}\frac{\partial}{\partial\theta_i} \left[\log \sum_y p_{\theta}(x) {p_{\theta'}(x)}^{\alpha-1}\right]_{\theta' = \theta}\\
  & = \frac{1}{\alpha-1}\sum_x \partial_i p_{\theta}(x)\cdot \partial_j'\left[\frac{{p_{\theta'}(x)}^{\alpha-1}}{\sum_y p_{\theta}(y) {p_{\theta'}(y)}^{\alpha-1}}\right]_{\theta' = \theta}\\
    & = \sum_x \partial_i p_{\theta}(x)\left[\frac{{p_{\theta}(x)}^{\alpha-2}\partial_j p_{\theta}(x) \sum_y p_{\theta}(y)^{\alpha} - p_{\theta}(x)^{\alpha-1}\sum_y p_{\theta}(y)^{\alpha-1}\partial_j p_\theta(y)}{(\sum_y p_{\theta}(y)^{\alpha})^2}\right]\\
  & = E_{\theta^{(\alpha)}}[\partial_i (\log p_{\theta}(X))\cdot \partial_j (\log p_{\theta}(X))]\nonumber\\
  \label{eqn:g-alpha-expansion}
  & \hspace{1.5cm}-E_{\theta^{(\alpha)}}[\partial_i \log p_{\theta}(X)]\cdot E_{\theta^{(\alpha)}}[\partial_j \log p_{\theta}(X)]\\\label{eqn:alpha-metric-covariance}
  & = \text{Cov}_{\theta^{(\alpha)}}[\partial_i \log p_{\theta}(X), \partial_j \log p_{\theta}(X)]\\
  \label{eqn:RiemannianOnS-alpha}
  & = \frac{1}{\alpha^2}\text{Cov}_{\theta^{(\alpha)}}[\partial_i \log p_{\theta}^{(\alpha)}(X), \partial_j \log p_{\theta}^{(\alpha)}(X)],
  \end{align}\normalsize
  where $p_{\theta}^{(\alpha)}$ is the $\alpha$-escort distribution associated with $p_{\theta}$,
  \begin{equation}
  \label{eqn:escort_distribution}
  p_{\theta}^{(\alpha)}(x) := \frac{p_{\theta}(x)^{\alpha}}{\sum_y {p_{\theta}(y)}^{\alpha}},
  \end{equation}
  and $E_{\theta^{(\alpha)}}$ denotes expectation with respect to $p_{\theta}^{(\alpha)}$. The equality (\ref{eqn:RiemannianOnS-alpha}) follows because\par\noindent\small
  \begin{align*}
  \partial_i p_\theta^{(\alpha)}(x) = \partial_i\left(\frac{p_\theta(x)^\alpha}{\sum_y p_\theta(y)^\alpha}\right) = \alpha\left[\frac{{p_{\theta}^{(\alpha)}(x)}}{p_{\theta}(x)}\partial_i p_{\theta}(x) - p_{\theta}^{(\alpha)}(x) \sum_y \frac{{p_{\theta}^{(\alpha)}(y)}}{p_{\theta}(y)}\partial_i p_{\theta}(y)\right].
  \end{align*}\normalsize
  
  If we define $S^{(\alpha)} := \{p_\theta^{(\alpha)} : p_\theta\in S\}$, then (\ref{eqn:RiemannianOnS-alpha}) tells us that $G^{(\alpha)}$ is essentially the usual Fisher information for the model $S^{(\alpha)}$ up to the scale factor $\alpha$.
  
  We shall call the metric defined by $G^{(\alpha)}$ an \emph{$\alpha$-information metric}. We shall assume that $G^{(\alpha)}$ is positive definite; see \cite[pp. 39-40]{kumar2020cram} for an example of a parameterization with respect to which this assumption holds.
   
  Let us now return to the general manifold $S$ with a coordinate system $\theta$. Denote $\nabla^{(\alpha)}:=\nabla^{(I_{\alpha})}$ and $\nabla^{(\alpha)*}:=\nabla^{(I_{\alpha}^*)}$ where the right-hand sides are as defined by Eguchi \cite{eguchi1992geometry} with $D = I_{\alpha}$.
  
  Motivated by the expression for the Riemannian metric in (\ref{eqn:alpha-metric}), define
  \begin{equation}
  \label{alpha-partial}
  \partial_i^{(\alpha)}(p_\theta (x)) := \frac{1}{\alpha-1}\partial_i'\left(\frac{{p_{\theta'}(x)}^{\alpha-1}}{\sum_y p_{\theta}(y)\, {p_{\theta'}(y)}^{\alpha-1}}\right)\bigg |_{\theta'=\theta}.
  \end{equation}
  We now identify the corresponding connection coefficients as
  \begin{align}
  \label{eqn:connection_coefficients}
  \Gamma_{ij,k}^{(\alpha)}& := \Gamma_{ij,k}^{(I_{\alpha})}\\
  & = - I_{\alpha}[\partial_i\partial_j,\partial_k] \nonumber\\
  & = \frac{1}{\alpha-1} \left[\sum_x \partial_j p_{\theta}(x) \cdot \partial_i\left(\partial_k^{(\alpha)}(p_\theta)\right) + \sum_x \partial_i\partial_j p_{\theta}(x) \cdot \partial_k^{(\alpha)}(p_\theta)\right]
  \end{align}
  and
  \begin{align}
  \label{eqn:dual_connection_coefficients}
  \Gamma_{ij,k}^{(\alpha)*} & := \Gamma_{ij,k}^{(I_{\alpha}^*)}\\ 
  & = -I_{\alpha}[\partial_k,\partial_i\partial_j] \nonumber\\
  & = \frac{1}{\alpha-1}\left[\sum_x \partial_k p_{\theta}(x) \cdot \partial_i'\partial_j'\left(\frac{{p_{\theta'}(x)}^{\alpha-1}}{\sum_y p_{\theta}(y){p_{\theta'}(y)}^{\alpha-1}}\right)\bigg |_{\theta'=\theta}\right].\nonumber\\
  \end{align}
  We also have (\ref{dualistic-structure}) specialized to our setting:
  \begin{align}
  \label{eqn:dual_connection}
  \partial_k g_{i,j}^{(\alpha)}=\Gamma_{ki,j}^{(\alpha)}+ \Gamma_{kj,i}^{(\alpha)*}.
  \end{align}
  $(G^{(\alpha)},\nabla^{(\alpha)},\nabla^{(\alpha)*})$ forms a dualistic structure on $S$. We shall call the connection $\nabla^{(\alpha)}$ with the connection coefficients $\Gamma_{ij,k}^{(\alpha)}$, an \emph{$\alpha$-connection}.
  
  When $\alpha = 1$, the metric $G^{(\alpha)}(\theta)$ coincides with the usual Fisher metric and the connections $\nabla^{(\alpha)}$ and $\nabla^{(\alpha)*}$ coincide with the $m$-connection $\nabla^{(m)}$ and the $e$-connection $\nabla^{(e)}$, respectively.
  
  A comparison of the expressions in (\ref{eqn:fisher-information-metric-deterministic}) and (\ref{eqn:RiemannianOnS-alpha}) suggests that the manifold $S$ with the $\alpha$-information metric may be equivalent to the Riemannian metric specified by the FIM on the manifold $S^{(\alpha)} := \{ p_{\theta}^{(\alpha)} : \theta \in \Theta \subset \mathbb{R}^n \}$. This is true to some extent because the Riemannian metric on $S^{(\alpha)}$ specified by the FIM is simply $G^{(\alpha)}(\theta) = [g_{ij}^{(\alpha)}(\theta)]$. However, our calculations indicate that the $\alpha$-connection and its dual on $S$ are not the same as the $e$- and the $m$-connections on $S^{(\alpha)}$ except when $\alpha = 1$. The $\alpha$-connection and its dual should therefore be thought of as a parametric generalization of the $e$- and $m$-connections. In addition, the $\alpha$-connections in (\ref{eqn:connection_coefficients}) and (\ref{eqn:dual_connection_coefficients}) are different from the $\alpha$-connection of Amari and Nagaoka \cite{amari2000methods}, which is a convex combination of the $e$- and $m$-connections.
  
  \subsection{An \texorpdfstring{$\alpha$}{}-Version of Cram\'{e}r-Rao Inequality}
  \label{subsec:analogous_cr_inequality}
  We now apply Amari and Nagoaka's theory \cite[2.5]{amari2000methods} to derive the $\alpha$-CR inequality. For this, we examine the geometry of $\mathcal{P}$ with respect to the metric $G^{(\alpha)}$ and the dual affine connections $\nabla^{(\alpha)}$ and $\nabla^{(\alpha)^*}$. 
  Note that $\mathcal{P}$ is an open subset of the affine subspace $\mathcal{A}_1 := \{A\in \mathbb{R}^{\mathbb{X}}:\sum\limits_x A(x) = 1\}$ and the tangent space at each $p \in \mathcal{P}$, $T_p(\mathcal{P})$ is the linear space
  \[
  \mathcal{A}_0 := \{A\in \mathbb{R}^{\mathbb{X}}:\sum\limits_x A(x) = 0\}.
  \]
  For every tangent vector $X\in T_p(\mathcal{P})$, let $X_p^{(e)}(x) := X(x)/p(x)$ at $p$ and call it the \emph{exponential representation of $X$ at $p$}. The collection of exponential representations is then
  \begin{align}
  \label{exponential_tangent_space}
  T_p^{(e)}(\mathcal{P}) := \{X_p^{(e)}:X\in T_p(\mathcal{P})\} = \{A\in \mathbb{R}^{\mathbb{X}}:E_p[A]=0\},\nonumber\hspace*{-1cm}\\
  \end{align}
  where the last equality is easy to check. Observe that (\ref{alpha-partial}) is
  \begin{align}
  \label{eqn:alpha_representation}
  \partial_i^{(\alpha)}(p_\theta (x)) & = \frac{1}{\alpha-1}\partial_i'\left(\frac{{p_{\theta'}(x)}^{\alpha-1}}{\sum_y p_{\theta}(y) {p_{\theta'}(y)}^{\alpha-1}}\right)\bigg |_{\theta'=\theta}\nonumber\\
  & = \left[\frac{{p_{\theta}(x)}^{\alpha-2}~\partial_i p_{\theta}(x)}{\sum_y {p_{\theta}(y)}^{\alpha}} - \frac{{p_{\theta}(x)}^{\alpha-1}~\sum_y {p_{\theta}(y)}^{\alpha-1}\partial_i p_{\theta}(y)}{(\sum_y {p_{\theta}(y)}^{\alpha})^2}\right]\nonumber\\
  & = \left[\frac{{p_{\theta}(x)}^{(\alpha)}}{p_{\theta}(x)}\partial_i(\log p_{\theta}(x)) - \frac{{p_{\theta}(x)}^{(\alpha)}}{p_{\theta}(x)}E_{\theta^{(\alpha)}}[\partial_i(\log p_{\theta}(X))]\right].
  \end{align}
  Define the above as an \emph{$\alpha$-representation of $\partial_i$ at $p_\theta$}. With this notation, the $\alpha$-information metric is 
  \begin{equation*}
  g_{i,j}^{(\alpha)}(\theta) = \sum_x \partial_i p_{\theta}(x) \cdot \partial_j^{(\alpha)}(p_\theta(x)).
  \end{equation*}
  It should be noted that $E_{\theta}[\partial_i^{(\alpha)}(p_\theta(X))] = 0$. This follows since 
  \[
  \partial_i^{(\alpha)} (p_{\theta}) = \frac{p_{\theta}^{(\alpha)}}{p_{\theta}} \partial_i \log p_{\theta}^{(\alpha)}.
  \]
  When $\alpha =1$, the right hand side of (\ref{eqn:alpha_representation}) reduces to $\partial_i(\log p_{\theta})$.
  
  Motivated by (\ref{eqn:alpha_representation}), the \emph{$\alpha$-representation of a tangent vector $X$ at $p$} is 
  \begin{align}
  \label{eqn:alpha_rep_tgt_vec}
  X_p^{(\alpha)}(x)
  & := \left[\frac{p^{(\alpha)}(x)}{p(x)}X_p^{(e)}(x) - \frac{p^{(\alpha)}(x)}{p(x)}E_{p^{(\alpha)}}[X_p^{(e)}]\right]\nonumber\\
  & = \left[\frac{p^{(\alpha)}(x)}{p(x)}\left(X_p^{(e)}(x) - E_{p^{(\alpha)}}[X_p^{(e)}]\right)\right].
  \end{align}
  The collection of all such $\alpha$-representations is
  \begin{align}
  T_p^{(\alpha)}(\mathcal{P}) := \{X_p^{(\alpha)} : X\in T_p(\mathcal{P})\}.
  \end{align}
  Clearly $E_p[X_p^{(\alpha)}] = 0$. Also, since any $A\in \mathbb{R}^{\mathbb{X}}$ with $E_p[A]=0$ is 
  \begin{align*}
    A = \left[\frac{p^{(\alpha)}}{p}\left(B-E_{p^{(\alpha)}}[B]\right)\right]
  \end{align*}
  with $B = \tilde{B}-E_p[\tilde{B}],$ where
  \[
  \tilde{B}(x) := \left[\frac{p(x)}{p^{(\alpha)}(x)} A(x)\right].
  \]
  In view of (\ref{exponential_tangent_space}), we have
  \begin{align}
  \label{e_space_equalto_alpha_space}
  T_p^{(e)}(\mathcal{P}) = T_p^{(\alpha)}(\mathcal{P}).
  \end{align}
  Now the inner product between any two tangent vectors $X,Y\in T_p(\mathcal{P})$ defined by the $\alpha$-information metric in (\ref{eqn:alpha-metric}) is 
  \begin{align}
  \label{eqn:alpha_metric_general}
  \langle X,Y\rangle^{(\alpha)}_p := E_p[X^{(e)}Y^{(\alpha)}].
  \end{align}
  Consider now an $n$-dimensional statistical manifold $S$, a submanifold of $\mathcal{P}$, together with the metric $G^{(\alpha)}$ as in (\ref{eqn:alpha_metric_general}). Let $T_p^*(S)$ be the dual space (cotangent space) of the tangent space $T_p(S)$ and let us consider for each $Y\in T_p(S)$, the element $\omega_Y\in T_p^*(S)$ which maps $X$ to $\langle X,Y\rangle^{(\alpha)}$.  The correspondence $Y\mapsto \omega_Y$ is a linear map between $T_p(S)$ and $T_p^*(S)$. An inner product and a norm on $T_p^*(S)$ are naturally inherited from $T_p(S)$ by
  \[
  \langle \omega_X,\omega_Y\rangle_p := \langle X,Y\rangle^{(\alpha)}_p
  \]
  and
  \[
  \|\omega_X\|_p := \|X\|_p^{(\alpha)} = \sqrt{\langle X,X\rangle^{(\alpha)}_p}.
  \]
  Now, for a (smooth) real function $f$ on  $S$, the \emph{differential} of $f$ at $p$, $(\text{d}f)_p$, is a member of $T_p^*(S)$ which maps $X$ to $X(f)$. The \emph{gradient of $f$ at p} is the tangent vector corresponding to $(\text{d}f)_p$, hence, satisfies
  \begin{align}
  \label{eqn:differential_of_function_alpha}
  (\text{d}f)_p(X) = X(f) = \langle (\text{grad} f)_p,X\rangle_p^{(\alpha)},
  \end{align}
  and
  \begin{align}
  \label{eqn:norm_of_differential_alpha}
  \|(\text{d}f)_p\|_p^2 = \langle (\text{grad}f)_p,(\text{grad}f)_p\rangle_p^{(\alpha)}.
  \end{align}
  Since $\text{grad}f$ is a tangent vector, 
  \begin{equation}
  \label{eqn:grad-f-alpha}
  \text{grad}f = \sum\limits_{i=1}^n h_i \partial_i
  \end{equation}
  for some scalars $h_i$. Applying (\ref{eqn:differential_of_function_alpha}) with $X = \partial_j$, for each $j=1,\dots,n$, and using (\ref{eqn:grad-f-alpha}), we obtain
  \begin{align*}
    (\partial_j)(f)
    & = \left\langle \sum\limits_{i=1}^n h_i \partial_i, \partial_j\right\rangle^{(\alpha)}\\
    & = \sum\limits_{i=1}^n h_i \langle \partial_i, \partial_j\rangle^{(\alpha)}\\
    & = \sum\limits_{i=1}^n h_i g_{i,j}^{(\alpha)}, \quad j = 1, \dots, n.
  \end{align*}
  This yields
  \[
  [h_1,\dots,h_n]^T = \left[G^{(\alpha)}\right]^{-1}[\partial_1(f),\dots,\partial_n(f)]^T,
  \]
  and so
  \begin{equation}
  \label{eqn:grad-coeff-equation-alpha}
  \text{grad}f = \sum\limits_{i,j} (g^{i,j})^{(\alpha)}\partial_j(f) \partial_i.
  \end{equation}
  From (\ref{eqn:differential_of_function_alpha}), (\ref{eqn:norm_of_differential_alpha}), and (\ref{eqn:grad-coeff-equation-alpha}), we get
  \begin{align}
  \label{differential_and_metric_alpha}
  \|(\text{d}f)_p\|_p^2 = \sum\limits_{i,j} (g^{i,j})^{(\alpha)}\partial_j(f) \partial_i(f)
  \end{align}
  where $(g^{i,j})^{(\alpha)}$ is the $(i,j)$th entry of the inverse of $G^{(\alpha)}$.
  
  With these preliminaries, we state results analogous to those in \cite[Sec.~2.5]{amari2000methods}.
  \begin{theorem}[\cite{kumar2020cram}]
  \label{thm:variance_and_norm_of_differential}
    Let $A:\mathbb{X}\to\mathbb{R}$ be any mapping (that is, a vector in $\mathbb{R}^{\mathbb{X}}$. Let $E[A]:\mathcal{P}\to \mathbb{R}$ be the mapping $p\mapsto E_p[A]$. We then have
    \begin{align}
    \label{eqn:variance_and_norm_of_differential}
    \text{Var}_{p^{(\alpha)}}\left[\frac{p}{p^{(\alpha)}}(A-E_p[A])\right] =  \|(\text{d}E_p[A])_p\|_p^2,
    \end{align}
    where the subscript $p^{(\alpha)}$ in Var means variance with respect to $p^{(\alpha)}$.
    $\hfill$
  \end{theorem}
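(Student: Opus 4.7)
The plan is to apply the two general identities derived in the paragraphs immediately preceding the theorem, namely $(\text{d}f)_p(X) = \langle (\text{grad}\,f)_p, X\rangle_p^{(\alpha)}$ and $\|(\text{d}f)_p\|_p^2 = \langle (\text{grad}\,f)_p, (\text{grad}\,f)_p\rangle_p^{(\alpha)}$, to the particular linear functional $f = E[A]$ on $\mathcal{P}$, and then simplify using the explicit $\alpha$-representation of the resulting gradient.

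First I would evaluate the differential on an arbitrary tangent vector $X \in T_p(\mathcal{P}) \subset \mathcal{A}_0$. Linearity of $E[A]$ in $p$ gives
\[ (\text{d}E[A])_p(X) \;=\; \sum_x X(x)\, A(x) \;=\; E_p[X^{(e)} A] \;=\; E_p[X^{(e)}\,(A - E_p[A])], \]
where the last step uses the tangent-vector property $E_p[X^{(e)}] = 0$. Comparing with the inner-product representation $\langle Y, X\rangle^{(\alpha)}_p = E_p[X^{(e)} Y^{(\alpha)}]$ identifies the gradient $Y := (\text{grad}\,E[A])_p$ as the tangent vector whose $\alpha$-representation is $Y^{(\alpha)} = A - E_p[A]$. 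Inverting \eqref{eqn:alpha_rep_tgt_vec} then forces $Y^{(e)}(x) = B(x) + c$ with $B(x) := (p(x)/p^{(\alpha)}(x))(A(x) - E_p[A])$ and $c$ a constant pinned down by the tangent-space constraint $E_p[Y^{(e)}] = 0$.

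Substituting into the norm formula and folding the factor $p(x)$ into $p^{(\alpha)}(x)$ inside the sum gives
\[ \|(\text{d}E[A])_p\|_p^2 \;=\; E_p[Y^{(e)} Y^{(\alpha)}] \;=\; \sum_x p^{(\alpha)}(x)\,(B(x)+c)\,B(x) \;=\; E_{p^{(\alpha)}}[B^2] + c\,E_{p^{(\alpha)}}[B]. \]
A one-line change of measure yields $E_{p^{(\alpha)}}[B] = \sum_x p(x)(A(x) - E_p[A]) = 0$, so the constant drops out and the right-hand side reduces to $E_{p^{(\alpha)}}[B^2] = \text{Var}_{p^{(\alpha)}}[B]$, which is exactly the stated identity.

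The main technical point lies in the gradient identification: I have to verify that $A - E_p[A]$ is admissible as the $\alpha$-representation of a genuine tangent vector, and the essential input here is the coincidence $T_p^{(e)}(\mathcal{P}) = T_p^{(\alpha)}(\mathcal{P})$ from \eqref{e_space_equalto_alpha_space}, since the $p$-mean of $A - E_p[A]$ vanishes and so it lies in the common space. The second subtlety is recognizing that the otherwise undetermined constant $c$ in $Y^{(e)}$ is automatically annihilated by $E_{p^{(\alpha)}}[B] = 0$ and never has to be computed explicitly. Everything else is routine manipulation of expectations.
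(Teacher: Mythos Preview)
Your proposal is correct and follows essentially the same route as the paper: identify $\text{grad}\,E[A]$ as the tangent vector $Y$ with $Y_p^{(\alpha)}=A-E_p[A]$ (legitimized by \eqref{e_space_equalto_alpha_space}), invert \eqref{eqn:alpha_rep_tgt_vec} to write $Y_p^{(e)}=\dfrac{p}{p^{(\alpha)}}(A-E_p[A])+\text{const}$, and kill the constant using $E_p[A-E_p[A]]=0$ before converting the remaining $p$-expectation into a $p^{(\alpha)}$-variance. The only cosmetic difference is that the paper names the constant explicitly as $E_{p^{(\alpha)}}[Y_p^{(e)}]$ via the inversion formula, whereas you leave it as an undetermined $c$ and observe it is annihilated; the logic is identical.
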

    \begin{proof}
    For any tangent vector $X\in T_p(\mathcal{P})$,
    \begin{align}
    \label{eqn:tangent_acting_on_expectation}
    X(E_p[A])
    & = \sum\limits_x X(x)A(x)\nonumber\\
    & = E_p[X_p^{(e)} \cdot A]\\
    & = E_p[X_p^{(e)}(A-E_p[A])].
    \end{align}
    Since $A-E_p[A]\in T_p^{(\alpha)}(\mathcal{P})$ (c.f.~(\ref{e_space_equalto_alpha_space})), there exists $Y\in T_p(\mathcal{P})$ such that $A-E_p[A] = Y_p^{(\alpha)}$, and $\text{grad}(E[A]) = Y$. Hence we see that
    \begin{align*}
    \|(\text{d}E[A])_p\|_p^2 & = E_p[Y_p^{(e)}Y_p^{(\alpha)}]\\
      & = E_p[Y_p^{(e)}(A-E_p[A])]\\
      & \stackrel{(a)}{=} \displaystyle E_p\left[\left\{\frac{p(X)}{ p^{(\alpha)}(X)} (A-E_p[A]) + E_{p^{(\alpha)}}[Y_p^{(e)}]\right\}(A-E_p[A])\right]\\
      & \stackrel{(b)}{=}  E_p\left[\frac{p(X)}{p^{(\alpha)}(X)}(A-E_p[A])(A-E_p[A])\right]\\
      & = E_{p^{(\alpha)}}\left[\frac{p(X)}{p^{(\alpha)}(X)}(A-E_p[A])\frac{p(X)}{p^{(\alpha)}(X)}(A-E_p[A])\right]\\
      & = \text{Var}_{p^{(\alpha)}}\left[\frac{p(X)}{p^{(\alpha)}(X)}(A-E_p[A])\right],
    \end{align*}
    where the equality (a) is obtained by applying (\ref{eqn:alpha_rep_tgt_vec}) to $Y$ and (b) follows because $E_p[A-E_p[A]] = 0$.
  \end{proof}

  \begin{corollary}[\cite{kumar2020cram}]
    \label{cor:variance_and_norm_of_differential_inequality}
    If $S$ is a submanifold of $\mathcal{P}$, then
    \begin{align}
    \label{variance_and_norm_of_differential}
    \text{Var}_{p^{(\alpha)}}\left[\frac{p(X)}{p^{(\alpha)}(X)}(A-E_p[A])\right] \ge \|(\text{d}E[A]|_{S})_p\|_p^2
    \end{align}
    with equality if and only if $$A-E_p[A]\in \{X_p^{(\alpha)} : X\in T_p(S)\} =: T_p^{(\alpha)}(S).$$ 
  \end{corollary}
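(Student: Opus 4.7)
The plan is to bootstrap Corollary~\ref{cor:variance_and_norm_of_differential_inequality} directly from Theorem~\ref{thm:variance_and_norm_of_differential} by exploiting the fact that restriction of a differential to a submanifold is contractive in the induced dual metric. First, I would apply Theorem~\ref{thm:variance_and_norm_of_differential} verbatim to the ambient manifold $\mathcal{P}$: this yields a tangent vector $Y \in T_p(\mathcal{P})$ with $A - E_p[A] = Y_p^{(\alpha)}$, the identity $Y = (\text{grad}\, E[A])_p$, and the equality
\begin{equation*}
\text{Var}_{p^{(\alpha)}}\left[\frac{p(X)}{p^{(\alpha)}(X)}(A-E_p[A])\right] = \|(\text{d}E[A])_p\|_p^2 = \langle Y, Y\rangle_p^{(\alpha)}.
\end{equation*}

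Next I would compare $\|(\text{d}E[A])_p\|_p^2$ with $\|(\text{d}E[A]|_S)_p\|_p^2$. Since $T_p(S)$ is a linear subspace of $T_p(\mathcal{P})$, the metric $\langle\cdot,\cdot\rangle_p^{(\alpha)}$ restricts to an inner product on $T_p(S)$, and we may orthogonally decompose $Y = Y_S + Y_\perp$ with $Y_S \in T_p(S)$ and $\langle Y_\perp, X\rangle_p^{(\alpha)} = 0$ for every $X \in T_p(S)$. For any $X \in T_p(S)$ one has $(\text{d}E[A]|_S)_p(X) = X(E[A]) = \langle Y, X\rangle_p^{(\alpha)} = \langle Y_S, X\rangle_p^{(\alpha)}$, so $Y_S$ is precisely the gradient of $E[A]|_S$ on $S$. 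By the formula \eqref{eqn:norm_of_differential_alpha} applied on $S$ and the Pythagorean identity $\|Y\|_p^2 = \|Y_S\|_p^2 + \|Y_\perp\|_p^2$, we obtain
\begin{equation*}
\|(\text{d}E[A]|_S)_p\|_p^2 = \langle Y_S, Y_S\rangle_p^{(\alpha)} \le \langle Y, Y\rangle_p^{(\alpha)} = \|(\text{d}E[A])_p\|_p^2,
\end{equation*}
and chaining with Theorem~\ref{thm:variance_and_norm_of_differential} gives the asserted inequality.

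Finally, for the equality condition, the Pythagorean decomposition shows equality holds if and only if $Y_\perp = 0$, that is, $Y \in T_p(S)$. Since $A - E_p[A] = Y_p^{(\alpha)}$, this is equivalent to $A - E_p[A] \in \{X_p^{(\alpha)} : X \in T_p(S)\} = T_p^{(\alpha)}(S)$, which is exactly the stated characterization.

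No step here is really an obstacle; the only thing to be careful about is the interplay between $T_p(S)$, its dual $T_p^*(S)$, and the identification of differentials with gradients through the (possibly non-symmetric-looking) metric \eqref{eqn:alpha_metric_general} defined via $E_p[X^{(e)} Y^{(\alpha)}]$. One must check that \eqref{eqn:alpha_metric_general} is genuinely symmetric and positive definite so that the orthogonal projection onto $T_p(S)$ is well defined; symmetry follows from \eqref{eqn:alpha-metric-covariance} expressing it as a covariance, and positive definiteness is the standing assumption already invoked for $G^{(\alpha)}$ after \eqref{eqn:RiemannianOnS-alpha}.
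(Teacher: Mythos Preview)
Your proposal is correct. The paper itself does not supply a proof of this corollary; it is stated immediately after Theorem~\ref{thm:variance_and_norm_of_differential} with only a citation to \cite{kumar2020cram}, so there is no in-text argument to compare against. Your route---applying Theorem~\ref{thm:variance_and_norm_of_differential} on the ambient $\mathcal{P}$, orthogonally projecting the gradient $Y$ onto $T_p(S)$ with respect to $\langle\cdot,\cdot\rangle_p^{(\alpha)}$, and reading off both the inequality and the equality case from the Pythagorean identity---is the standard argument underlying the Amari--Nagaoka framework invoked in \cite[Sec.~2.5]{amari2000methods}, and it is exactly what the corollary is meant to encapsulate. Your closing remark about symmetry and positive definiteness of \eqref{eqn:alpha_metric_general} is apt and matches the paper's own standing assumption after \eqref{eqn:RiemannianOnS-alpha}; one could add, for completeness, that the map $X\mapsto X_p^{(\alpha)}$ is injective (if $X_p^{(\alpha)}=0$ then $X_p^{(e)}$ is constant, hence zero since $E_p[X_p^{(e)}]=0$), so that $Y\in T_p(S)$ is genuinely equivalent to $A-E_p[A]\in T_p^{(\alpha)}(S)$.
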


  We use the aforementioned ideas to establish an $\alpha$-version of the CR inequality for the $\alpha$-escort of the underlying distribution. This gives a lower bound for the variance of the unbiased estimator $\hat{\theta}^{(\alpha)}$ in $S^{(\alpha)}$.
  
  \begin{theorem} [$\alpha$-version of Cram\'{e}r-Rao inequality \cite{kumar2020cram}]
  \label{thm:alpha_CRLB}
  Let $S = \{p_{\theta} : \theta = (\theta_1,\dots,\theta_m)\in\Theta\}$ be the given statistical model. Let $\hat{\theta}^{(\alpha)} = (\hat{\theta}^{(\alpha)}_1,\dots,\hat{\theta}^{(\alpha)}_m)$ be an unbiased estimator of $\theta = (\theta_1,\dots,\theta_m)$ for the statistical model $S^{(\alpha)} := \{p_\theta^{(\alpha)} : p_\theta\in S\}$. Then, $\text{Var}_{\theta^{(\alpha)}}[\hat{\theta}^{(\alpha)}(X)] \ge [G^{(\alpha)}]^{-1}$, 
  where 
  $\theta^{(\alpha)}$ denotes expectation with respect to $p_{\theta}^{(\alpha)}$. On the other hand, given an unbiased estimator $\hat{\theta} = (\hat{\theta}_1,\dots,\hat{\theta}_m)$ of $\theta$ for $S$, there exists an unbiased estimator $\hat{\theta}^{(\alpha)} = (\hat{\theta}^{(\alpha)}_1,\dots,\hat{\theta}^{(\alpha)}_m)$ of $\theta$ for $S^{(\alpha)}$ such that $\text{Var}_{\theta^{(\alpha)}}[\hat{\theta}^{(\alpha)}(X)] \ge [G^{(\alpha)}]^{-1}$.
  
  \noindent(We follow the convention that, for two matrices $M$ and $N$, $M \ge N$ implies that $M-N$ is positive semi-definite.)
  \end{theorem}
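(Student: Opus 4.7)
The plan is to reduce the matrix inequality $\text{Var}_{\theta^{(\alpha)}}[\hat\theta^{(\alpha)}] \ge [G^{(\alpha)}]^{-1}$ to a one-parameter family of scalar variance-versus-gradient inequalities, one per direction $c \in \mathbb{R}^m$, obtained from Corollary \ref{cor:variance_and_norm_of_differential_inequality}. This mirrors the derivation of the classical CRLB sketched at \eqref{eqn:cramer_Rao_classical}, but with the $\alpha$-information metric in place of the Fisher metric and with the corollary's reweighted variance in place of the ordinary one. I would treat both assertions simultaneously by exhibiting an explicit (parameter-dependent) bijection between the two families of unbiased estimators.

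Given $\hat\theta$ unbiased for $\theta$ in $S$, set
\[
\hat\theta^{(\alpha)}_i(x) := \frac{p_\theta(x)}{p_\theta^{(\alpha)}(x)}\bigl(\hat\theta_i(x) - \theta_i\bigr) + \theta_i, \qquad i = 1,\ldots,m.
\]
The identity $\sum_x p_\theta^{(\alpha)}(x) \cdot \frac{p_\theta(x)}{p_\theta^{(\alpha)}(x)} f(x) = E_{p_\theta}[f]$ immediately yields $E_{p_\theta^{(\alpha)}}[\hat\theta^{(\alpha)}_i] = \theta_i$, so $\hat\theta^{(\alpha)}$ is unbiased for $\theta$ in $S^{(\alpha)}$. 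The inverse assignment $\hat\theta_i(x) := \frac{p_\theta^{(\alpha)}(x)}{p_\theta(x)}(\hat\theta^{(\alpha)}_i(x)-\theta_i) + \theta_i$ sends any $\hat\theta^{(\alpha)}$ unbiased in $S^{(\alpha)}$ to an $\hat\theta$ unbiased in $S$ and composes to the identity with the forward map, allowing both parts of the theorem to be deduced from a single scalar inequality.

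Fix $c \in \mathbb{R}^m$ and apply Corollary \ref{cor:variance_and_norm_of_differential_inequality} with $p = p_\theta$ and $A(x) := c^T\hat\theta(x)$. Because $\hat\theta$ is unbiased in $S$, the restriction of $E[A]$ to $S$ is the coordinate map $p_\theta \mapsto c^T\theta$, so $\partial_i(E[A]|_S) = c_i$, and \eqref{differential_and_metric_alpha} gives
\[
\|(dE[A]|_S)_{p_\theta}\|_{p_\theta}^2 = \sum_{i,j}(g^{i,j})^{(\alpha)} c_i c_j = c^T[G^{(\alpha)}(\theta)]^{-1} c.
\]
The defining identity of $\hat\theta^{(\alpha)}$ gives $\frac{p_\theta(x)}{p_\theta^{(\alpha)}(x)}(c^T\hat\theta(x)-c^T\theta) = c^T\hat\theta^{(\alpha)}(x)-c^T\theta$, so the corollary's left-hand side collapses (after dropping the additive constant under variance) to $\text{Var}_{\theta^{(\alpha)}}[c^T\hat\theta^{(\alpha)}]$. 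Therefore $c^T\text{Var}_{\theta^{(\alpha)}}[\hat\theta^{(\alpha)}]c \ge c^T[G^{(\alpha)}]^{-1}c$ for every $c \in \mathbb{R}^m$, which is exactly the asserted PSD matrix inequality.

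The main subtlety I anticipate is that the construction of $\hat\theta^{(\alpha)}$ from $\hat\theta$ (and conversely) depends explicitly on the unknown $\theta$ through the escort ratio $p_\theta/p_\theta^{(\alpha)}$ and through the centering, so strictly speaking each ``estimator'' is a $\theta$-indexed family of statistics rather than a single measurable function on $\mathbb{X}$. This is precisely why the second assertion is phrased existentially and why the first assertion must be read as an intrinsic bound attached to any unbiased family in $S^{(\alpha)}$. Once this interpretation is granted, the remaining steps---unbiasedness of $\hat\theta^{(\alpha)}$ and the escort identity used to collapse the reweighted variance---are routine escort-distribution algebra, and the essential geometric input is concentrated in Corollary \ref{cor:variance_and_norm_of_differential_inequality} together with the gradient/norm formula \eqref{differential_and_metric_alpha}.
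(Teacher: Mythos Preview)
Your proposal follows essentially the same strategy as the paper: reduce to a one-parameter family of scalar inequalities by taking $A = c^T\hat\theta$, invoke Corollary~\ref{cor:variance_and_norm_of_differential_inequality} for the left side, evaluate the right side via \eqref{differential_and_metric_alpha}, and pass between estimators in $S$ and $S^{(\alpha)}$ through the escort ratio $p_\theta/p_\theta^{(\alpha)}$.

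The one substantive difference is your centering. The paper uses the uncentered assignments $\hat\theta_i = (p_\theta^{(\alpha)}/p_\theta)\,\hat\theta_i^{(\alpha)}$ and $\hat\theta_i^{(\alpha)} = (p_\theta/p_\theta^{(\alpha)})\,\hat\theta_i$; with those, the left side of \eqref{variance_and_norm_of_differential} becomes $\text{Var}_{p^{(\alpha)}}\!\big[c^T\hat\theta^{(\alpha)} - (p_\theta/p_\theta^{(\alpha)})\,c^T\theta\big]$, and the extra term $(p_\theta/p_\theta^{(\alpha)})\,c^T\theta$ is a genuine random variable, so the identification with $c^T\text{Var}_{\theta^{(\alpha)}}[\hat\theta^{(\alpha)}]c$ is not immediate. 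Your centered bijection $\hat\theta_i^{(\alpha)} = (p_\theta/p_\theta^{(\alpha)})(\hat\theta_i-\theta_i)+\theta_i$ makes that term the constant $c^T\theta$ and the collapse to $\text{Var}_{\theta^{(\alpha)}}[c^T\hat\theta^{(\alpha)}]$ is then exact; this is the cleaner execution of the same idea. Your explicit remark on the $\theta$-dependence of the construction is also well taken and is left implicit in the paper.
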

  
    \begin{proof}
    Given an unbiased estimator $\hat{\theta}^{(\alpha)} = (\hat{\theta}^{(\alpha)}_1,\dots,\hat{\theta}^{(\alpha)}_m)$ of $\theta = (\theta_1,\dots,\theta_m)$ for the statistical model $S^{(\alpha)}$, let
  \begin{align}
  \label{unbiased_estimator_for_S}
  \hat{\theta_i}(X) := \frac{p_{\theta}^{(\alpha)}(X)}{p_{\theta}(X)}\hat{\theta}_i^{(\alpha)}(X).
  \end{align}
  It is easy to check that $\hat{\theta}$ is an unbiased estimator of $\theta$ for $S$. Hence, if we let $A = \sum\limits_{i=1}^m c_i \hat{\theta_i}$, for $c = (c_1,\dots,c_m)\in \mathbb{R}^m$, then from (\ref{variance_and_norm_of_differential}) and (\ref{differential_and_metric_alpha}), we have
  \begin{align}
  \label{eq:analogous_cramer_rao_inequality}
  c\text{Var}_{\theta^{(\alpha)}}[\hat{\theta}^{(\alpha)}(X)]c^t \ge c[G^{(\alpha)}]^{-1}c^t.
  \end{align}
  This proves the first part.
  
  For the converse, consider an unbiased estimator $\hat{\theta} = (\hat{\theta}_1,\dots,\hat{\theta}_m)$ of $\theta$ for $S$. Let
  \begin{align}
  \label{unbiased_estimator_alpha_version}
  \hat{\theta}_i^{(\alpha)}(X) := \frac{p_{\theta}(X)}{p_{\theta}^{(\alpha)}(X)}\hat{\theta_i}(X).
  \end{align}
  This is an unbiased estimator of $\theta_i$ for $S^{(\alpha)}$.  Hence, the assertion follows from the first part of the proof. 
  \end{proof}
  
  When $\alpha = 1$, the inequality in (\ref{eq:analogous_cramer_rao_inequality}) reduces to the classical Cram\'{e}r-Rao inequality. We see that (\ref{eq:analogous_cramer_rao_inequality}) is, in fact, the Cram\'{e}r-Rao inequality for the $\alpha$-escort family $S^{(\alpha)}$.

\subsection{Generalized version of Cram\'{e}r-Rao inequality}
\label{subsec:general_framework}
We apply the result in (\ref{eq:analogous_cramer_rao_inequality}) to a more general class of $f$-divergences. Observe from (\ref{eqn:the-function-in_D_f}) that $I_\alpha$-divergence is a monotone function of an $f$-divergence not of the actual distributions but their $\alpha$-escort distributions.
Motivated by this, we first define a more general $f$-divergence and then show that these diveregnces also lead to generalized CR inequality analogous to (\ref{eq:analogous_cramer_rao_inequality}). Although these divergences are defined for positive measures, we restrict to probability measures here.
  
  \begin{definition}
  \label{defn:gen_f-divergence}
  Let $f$ be a strictly convex, twice continuously differentiable real valued function defined on $[0,\infty)$ with $f(1) = 0$ and $f''(1)\neq 0$. Let $F$ be a function that maps a probability distribution $p$ to another probability distribution $F(p)$. Then the {\em generalized $f$-divergence} between two probability distributions $p$ and $q$ is defined by
  \begin{equation}
      \label{eqn:gen_f-divergence}
      D_f^{(F)}(p,q) = \frac{1}{f''(1)}\cdot\sum_x F(q(x))f\left(\frac{F(p(x))}{F(q(x))}\right).
  \end{equation}
  \end{definition}
  
  Since $f$ is convex, by Jensen's inequality,
  \begin{align*}
   D_f^{(F)}(p,q) & \ge \frac{1}{f''(1)} \cdot f\left(\sum_x F(q(x))\cdot \frac{F(p(x))}{F(q(x))}\right)\\
   & = \frac{1}{f''(1)} \cdot f\left(\sum_x F(p(x))\right)\\
   & = \frac{1}{f''(1)}\cdot f(1)\\
   & = 0.
  \end{align*}
  Notice that, when $F(p(x)) = p(x)$, $D_f^{(F)}$ becomes the usual Csisz\'ar divergence. We now apply Eguchi's theory to $D_f^{(F)}$. The Riemannian metric on $S$ is specified by the matrix $G^{(f,F)}(\theta) = [g_{i,j}^{(f,F)}(\theta)]$, where\par\noindent\small
  \begin{align}
  \lefteqn{g_{i,j}^{(f,F)}(\theta) ~ := ~ g_{i,j}^{(D_f^{(F)})}(\theta)} \nonumber\\
  & = -\frac{\partial}{\partial\theta_j'}\frac{\partial}{\partial\theta_i}D_f^{(F)}(p_{\theta},p_{\theta'})\bigg|_{\theta' = \theta} \nonumber\\
    & = - \frac{\partial}{\partial\theta_j'}\frac{\partial}{\partial\theta_i} \sum_x F(p_{\theta'}(x))f\left(\frac{F(p_{\theta}(x))}{F(p_{\theta'}(x))}\right)\bigg|_{\theta' = \theta}\cdot \frac{1}{f''(1)}\nonumber\\
  & = - \frac{\partial}{\partial\theta_j'}\left[\sum_x F(p_{\theta'}(x))f'\left(\frac{F(p_{\theta}(x))}{F(p_{\theta'}(x))}\right)\frac{F'(p_{\theta}(x))}{F(p_{\theta'}(x))}\partial_i p_{\theta}(x)\right]_{\theta' = \theta}\cdot \frac{1}{f''(1)}\nonumber\\
  & = \left[\sum_x F'(p_{\theta}(x)) f''\left(\frac{F(p_{\theta}(x))}{F(p_{\theta'}(x))}\right)\frac{F(p_{\theta}(x))}{F(p_{\theta'}(x))^2}F'(p_{\theta'}(x))\partial_i p_{\theta}(x)\partial_j p_{\theta}(x)\right]_{\theta' = \theta}\nonumber\\
  &\hspace{4mm}\cdot \frac{1}{f''(1)}\nonumber\\
  & = \sum_x F(p_{\theta}(x))\cdot \partial_i \log F(p_{\theta}(x))\cdot \partial_j \log F(p _{\theta}(x))\nonumber\\
    \label{eqn:gen_metric}
  & = E_{\theta^{(F)}}[\partial_i \log F(p_{\theta}(X))\cdot \partial_j \log F(p _{\theta}(X))],
  \end{align}\normalsize
   where $\theta^{(F)}$ stands for expectation with respect to the escort measure $F(p_\theta)$.
   
  Although the \textit{generalized} Csisz\'ar $f$-divergence is also a  Csisz\'ar$f$-divergence, it is not between $p$ and $q$. Rather, it is between the distributions $F(p)$ and $F(q)$. As a consequence, the metric induced by $D_f^{(F)}$ is different from the Fisher information metric, whereas the metric arising from all Csisz\'ar $f$-divergences is the Fisher information metric \cite{amari2010information}.
  
  The following theorem extends the result in Theorem \ref{thm:alpha_CRLB} to a more general framework.
  
  \begin{theorem}[Generalized version of Cram\'{e}r-Rao inequality \cite{kumar2020cram}]
  \label{thm:gen_crlb}
  Let $\hat{\theta} = (\hat{\theta}_1,\dots,\hat{\theta}_m)$ be an unbiased estimator of $\theta = (\theta_1,\dots,\theta_m)$ for the statistical model $S$. Then there exists an unbiased estimator $\hat{\theta}^{F}$ of $\theta$ for the model $S^{(F)} = \{F(p) : p\in S\}$ such that $\text{Var}_{\theta^{(F)}}[\hat{\theta}^{(F)}(X)] \ge [G^{(f,F)}]^{-1}$. Further, if $S$ is such that its escort model $S^{(F)}$ is exponential, then there exists efficient estimators for the escort model.
  \end{theorem}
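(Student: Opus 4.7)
The plan is to mirror the proof of Theorem \ref{thm:alpha_CRLB} but replace the $\alpha$-escort map by the general map $F$. The key structural observation, already visible in \eqref{eqn:gen_metric}, is that $G^{(f,F)}(\theta)$ is formally the Fisher information matrix of the escort family $S^{(F)}=\{F(p_\theta):p_\theta\in S\}$ when the latter is parametrized by the same $\theta$. So the generalized inequality should be obtainable by translating the classical CR bound from the escort family back to $S$ via a change of estimator.

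First, given an unbiased estimator $\hat{\theta}=(\hat{\theta}_1,\dots,\hat{\theta}_m)$ of $\theta$ for $S$, I would, in analogy with \eqref{unbiased_estimator_alpha_version}, define
\[
\hat{\theta}^{(F)}_i(x) \;:=\; \frac{p_\theta(x)}{F(p_\theta)(x)}\,\hat{\theta}_i(x), \qquad i=1,\dots,m.
\]
A one-line computation shows $E_{\theta^{(F)}}[\hat{\theta}^{(F)}_i]=\sum_x F(p_\theta(x))\cdot\frac{p_\theta(x)}{F(p_\theta)(x)}\hat{\theta}_i(x)=E_\theta[\hat{\theta}_i]=\theta_i$, so $\hat{\theta}^{(F)}$ is unbiased for $\theta$ on $S^{(F)}$. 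This step requires the mild assumption that $F(p_\theta)(x)>0$ whenever $p_\theta(x)>0$, which is built into the setup.

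Next, I would develop an $F$-representation of tangent vectors, mimicking \eqref{eqn:alpha_rep_tgt_vec} with $p^{(F)}:=F(p)$ replacing $p^{(\alpha)}$, together with $X^{(F)}_p(x)=\frac{p^{(F)}(x)}{p(x)}\bigl(X^{(e)}_p(x)-E_{p^{(F)}}[X^{(e)}_p]\bigr)$. Using $\partial_i F(p_\theta)=F'(p_\theta)\,\partial_i p_\theta$ and the identity $g_{i,j}^{(f,F)}(\theta)=\sum_x \partial_i p_\theta(x)\cdot\partial_j^{(F)}(p_\theta(x))$ obtained by the same manipulation as \eqref{eqn:alpha-metric}, I would establish the direct analogues of Theorem \ref{thm:variance_and_norm_of_differential} and Corollary \ref{cor:variance_and_norm_of_differential_inequality}, yielding
\[
\operatorname{Var}_{p^{(F)}}\!\left[\frac{p}{p^{(F)}}\bigl(A-E_p[A]\bigr)\right]\;\ge\;\|(\mathrm{d}E[A]|_S)_p\|_p^2
\]
for $A:\mathbb{X}\to\mathbb{R}$. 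Applying this to $A=\sum_i c_i\hat{\theta}_i$ (so that $\frac{p}{p^{(F)}}(A-E_p[A])=\sum_i c_i(\hat{\theta}^{(F)}_i-\theta_i)$) and using \eqref{differential_and_metric_alpha} for the metric $G^{(f,F)}$ gives $c^{\!\top}\!\operatorname{Var}_{\theta^{(F)}}[\hat{\theta}^{(F)}]\,c\ge c^{\!\top}[G^{(f,F)}]^{-1}c$ for every $c\in\mathbb{R}^m$, which is the desired matrix inequality. An alternative route, once the identification of $G^{(f,F)}$ with the Fisher matrix of $S^{(F)}$ is established, is to invoke the classical CR inequality directly on $S^{(F)}$ applied to $\hat{\theta}^{(F)}$.

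For the efficiency claim, when $S^{(F)}$ is an exponential family in $\theta$, the standard theory (e.g.\ the equality condition in the classical CR inequality and $e$-flatness from \cite[Sec.~2.5]{amari2000methods}) guarantees that an unbiased estimator attaining the Fisher bound exists, namely the one whose centered form lies in the $e$-representation tangent space at each $p^{(F)}$. The main technical obstacle is the first route: verifying, in the $F$-representation language, that $A-E_p[A]$ always admits a representation $X_p^{(F)}$ for some $X\in T_p(\mathcal{P})$ (needed for the variance-differential identity to become an inequality on submanifolds). This amounts to solving $A=\frac{p^{(F)}}{p}(B-E_{p^{(F)}}[B])$ for $B$, which can be done explicitly as in the $\alpha$-case by setting $\tilde B(x)=\frac{p(x)}{p^{(F)}(x)}A(x)$ and $B=\tilde B-E_p[\tilde B]$; the verification that $T_p^{(e)}(\mathcal{P})=T_p^{(F)}(\mathcal{P})$ is then routine.
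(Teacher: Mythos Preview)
Your alternative route for the first assertion --- identify $G^{(f,F)}$ via \eqref{eqn:gen_metric} as the Fisher information matrix of the escort family $S^{(F)}$ and then apply the classical CR inequality there to the transported estimator $\hat\theta^{(F)}$ --- is correct and is essentially what the paper's terse ``following the same steps as in Theorems~\ref{thm:variance_and_norm_of_differential}--\ref{thm:alpha_CRLB} and Corollary~\ref{cor:variance_and_norm_of_differential_inequality}'' amounts to. Your main route, however, has a genuine gap: the $F$-representation $X_p^{(F)}(x)=\frac{p^{(F)}(x)}{p(x)}\bigl(X^{(e)}_p(x)-E_{p^{(F)}}[X^{(e)}_p]\bigr)$, obtained by naively substituting $p^{(F)}=F(p)$ for $p^{(\alpha)}$ in \eqref{eqn:alpha_rep_tgt_vec}, does \emph{not} reproduce $G^{(f,F)}$. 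A direct computation gives $\sum_x \partial_i p_\theta(x)\,(\partial_j)^{(F)}_p(x)=\operatorname{Cov}_{p^{(F)}}\!\bigl[\partial_i\log p_\theta,\partial_j\log p_\theta\bigr]$, whereas \eqref{eqn:gen_metric} is $E_{p^{(F)}}\!\bigl[\partial_i\log F(p_\theta)\cdot\partial_j\log F(p_\theta)\bigr]$. These two coincide in the $\alpha$-case only because of the special identity $\partial_i\log p_\theta^{(\alpha)}=\alpha\,\partial_i\log p_\theta+(\text{a function of }\theta\text{ alone})$, which has no counterpart for a general escort map $F$. Hence the claimed identity $g_{i,j}^{(f,F)}(\theta)=\sum_x\partial_ip_\theta(x)\cdot\partial_j^{(F)}(p_\theta(x))$ fails, and the variance--differential chain you sketch does not go through with the representation you wrote down.

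For the efficiency assertion, the paper is considerably more explicit than your appeal to ``standard theory'' and the CR equality condition. It writes the escort exponential family as $\log F(p_\theta(x))=c(x)+\sum_i\theta_ih_i(x)-\psi(\theta)$, sets $\hat\eta_i(x):=h_i(x)$ and $\eta_i:=E_{\theta^{(F)}}[h_i(X)]=\partial_i\psi(\theta)$, and verifies directly both that $\operatorname{Cov}_{\theta^{(F)}}[\hat\eta_i,\hat\eta_j]=g_{i,j}^{(f,F)}(\theta)$ and that $\partial_i\eta_j=g_{i,j}^{(f,F)}(\theta)$, so that $\eta$ is the dual parameter and the generalized Fisher matrix in the $\eta$-coordinates is $[G^{(f,F)}]^{-1}$; thus $\hat\eta$ is an efficient estimator of $\eta$ for the escort model. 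Your reference to the equality case and $e$-flatness is valid in principle but skips this concrete construction.
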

    \begin{proof}
  Following the same steps as in Theorems \ref{thm:variance_and_norm_of_differential}-\ref{thm:alpha_CRLB} and Corollary \ref{cor:variance_and_norm_of_differential_inequality} produces 
  \begin{align}
  \label{eq:analogous_cramer_rao_inequality2}
  c\text{Var}_{\theta^{(F)}}[\widehat{\theta}^{(F)}]c^t \ge c[G^{(f,F)}]^{-1}c^t
  \end{align}
   for an unbiased estimator $\widehat{\theta}^{(F)}$ of $\theta$ for $S^{(F)}$. This proves the first assertion of the theorem. Now let us suppose that $p_\theta$ is model such that
  \begin{equation}
  \label{eqn:escort_model}
      \log F(p_\theta(x)) = c(x) + \sum_{i=1}^k \theta_i h_i(x) - \psi(\theta).
  \end{equation}
  Then
  \begin{equation}
  \label{eqn:derivative_escort_model}
      \partial_i \log F(p _{\theta}(x)) = h_i(x) - \psi(\theta).
  \end{equation}
  Let
    \[  
  \widehat{\eta}(x) := h_i(x) \quad \text{and} \quad \eta := E_{\theta^{(F)}}[\widehat{\eta}(X)].
  \]
  Since $E_{\theta^{(F)}}[\partial_i \log F(p _{\theta}(X))] = 0$, we have
  \begin{equation}
      \label{eqn:derivative_of_potential}
      \partial_i \psi(\theta) = \eta_i.
  \end{equation}
  Hence
  \begin{equation}
  \label{eqn:gen_metric_dual_expression1}
      g_{i,j}^{(f,F)}(\theta) = E_{\theta^{(F)}}[(\widehat{\eta}_i(X) - \eta_i)(\widehat{\eta}_j(X) - \eta_j)]
  \end{equation}
  Moreover, since
  \begin{align}
  \label{eqn:second_derivative}
  \partial_i \partial_j\log F(p _{\theta}(x)) & = \partial_i\left[\frac{1}{F(p _{\theta}(x))}\partial_j \log F(p _{\theta}(x))\right]\nonumber\\
  & = \frac{1}{F(p _{\theta}(x))}\partial_i \partial_j F(p _{\theta}(x)) - \frac{1}{F(p _{\theta}(x))^2}\partial_i F(p _{\theta}(x))\partial_j F(p _{\theta}(x))\nonumber\\
    & = \frac{1}{F(p _{\theta}(x))}\partial_i \partial_j F(p _{\theta}(x)) - \partial_i\log F(p _{\theta}(x))\partial_j\log F(p _{\theta}(x)),
  \end{align}
from (\ref{eqn:gen_metric}), we have
\begin{equation}
  \label{eqn:gen_metric_dual_expression2}
      g_{i,j}^{(f,F)}(\theta) = - E_{\theta^{(F)}}[\partial_i \partial_j\log F(p _{\theta}(X))].
  \end{equation}  
  Hence, from (\ref{eqn:derivative_escort_model}) and (\ref{eqn:derivative_of_potential}), we have
  \begin{equation}
      \label{eqn:dual_equation}
      \partial_i \eta_j = g_{i,j}^{(f,F)}(\theta).
  \end{equation}
  This implies that $\eta$ is dual to $\theta$. Hence the generalized FIM of $\eta$ is equal to the inverse of the generalized FIM of $\theta$. Thus from
  (\ref{eqn:gen_metric_dual_expression1}), $\widehat{\eta}$ is an efficient estimator of $\eta$ for the escort model. This further helps us to find efficient estimators for $\theta$ for the escort model. This completes the proof. 
  \end{proof}
    
  Theorem~\ref{thm:gen_crlb} generalizes the dually flat structure of exponential and linear families with respect to the Fisher metric identified by Amari and Nagoaka
  \cite[Sec.~3.5]{amari2000methods} to other distributions and a more widely applicable metric (as in Definition~\ref{defn:gen_f-divergence}).

\section{Information Geometry for Bayesian CR inequality and Barankin Bound}
\label{sec:bcrlb}
We extend Eguchi's theory in Section~\ref{sec:desiderata} to the space $\tilde{\mathcal{P}}(\mathbb{X})$ of all positive measures on $\mathbb{X}$, that is, $\tilde{\mathcal{P}} = \{\tilde{p}:\mathbb{X}\to (0,\infty)\}$. Let $S = \{p_\theta: \theta = (\theta_1,\dots,\theta_k)\in\Theta\}$ be a $k$-dimensional sub-manifold of $\mathcal{P}$ and let
\begin{align}
\label{eqn:denormalized_manifold}
\tilde{S} := \{\tilde{p}_{\theta}(x) = p_{\theta}(x)\lambda(\theta) : p_{\theta}\in S\},
\end{align}
where $\lambda$ is a probability distribution on $\Theta$. Then $\tilde{S}$ is a sub-manifold of $\tilde{\mathcal{P}}$. For $\tilde{p}_\theta, \tilde{p}_{\theta'}\in \tilde{\mathcal{S}}$, the KL-divergence between $\tilde{p}_\theta$ and $\tilde{p}_{\theta'}$ is given by
\begin{align}
  \label{eqn:KL-div}
  I(\tilde{p}_\theta\|\tilde{p}_{\theta'}) & = \sum_x \tilde{p}_{\theta}(x) \log \frac{\tilde{p}_{\theta}(x)}{\tilde{p}_{\theta'}(x)} - \sum_x \tilde{p}_{\theta}(x) + \sum_x \tilde{p}_{\theta'}(x)\nonumber\\
  & = \sum_x p_{\theta}(x)\lambda(\theta) \log \frac{p_{\theta}(x)\lambda(\theta)}{p_{\theta'}(x)\lambda(\theta')} - \lambda(\theta) + \lambda(\theta').\nonumber\\
\end{align}
By following Eguchi, we define a Riemannian metric $G^{(I)}(\theta) = [g_{i,j}^{(I)}(\theta)]$ on $\tilde{S}$ by
\begin{align}
  g_{i,j}^{(I_\lambda)}(\theta)
  & := - I[\partial_i\|\partial_j]\nonumber\\
  & = \left. - \frac{\partial}{\partial \theta_i} \frac{\partial}{\partial \theta'_j} \sum_{x} p_{\theta}(x)\lambda(\theta) \log \frac{p_{\theta}(x)\lambda(\theta)}{p_{\theta'}(x)\lambda(\theta')} \right|_{\theta' = \theta}\nonumber\\
  & = \sum_x \partial_i (p_{\theta}(x)\lambda(\theta))\cdot \partial_j \log (p _{\theta}(x)\lambda(\theta))\nonumber\\
  & = \sum_x p_{\theta}(x)\lambda(\theta)\partial_i (\log p_{\theta}(x)\lambda(\theta))\cdot \partial_j (\log (p _{\theta}(x)\lambda(\theta)))\nonumber\\
  & = \lambda(\theta)\sum_x p_{\theta}(x)[\partial_i (\log p_{\theta}(x)) + \partial_i (\log \lambda(\theta))]\nonumber\\
  & \hspace*{2cm}\cdot [\partial_j (\log p_{\theta}(x)) + \partial_j (\log \lambda(\theta))]\nonumber\\
  & = \lambda(\theta) \big\{E_{\theta}[\partial_i \log p_{\theta}(X)\cdot \partial_j \log p_{\theta}(X)]\nonumber\\
  & \hspace*{2cm}\cdot  + \partial_i (\log \lambda(\theta))\cdot\partial_j (\log \lambda(\theta))\big\}\\
  \label{eqn:fisher-information-metric-bayesian}
  & = \lambda(\theta) \big\{g_{i,j}^{(e)}(\theta) + J_{i,j}^{\lambda}(\theta)\big\},
\end{align}
where
\begin{equation}
\label{eqn:Fisher_metric_bayesian}
g_{i,j}^{(e)}(\theta) := E_{\theta}[\partial_i \log p_{\theta}(X)\cdot \partial_j \log p_{\theta}(X)],
\end{equation}
and
\begin{equation}
\label{eqn:J_matrix_bayesian}
J_{i,j}^{\lambda}(\theta) := \partial_i (\log \lambda(\theta))\cdot\partial_j (\log \lambda(\theta)).
\end{equation} 
Let $G^{(e)}(\theta) := [g^{(e)}_{i,j}(\theta)]$ and $J^{\lambda}(\theta) := [J_{i,j}^{\lambda}(\theta)]$. Then 
\begin{equation}
\label{eqn:Bayesian_Fisher}
G^{(I)}(\theta) = \lambda(\theta)\big[G^{(e)}(\theta) + J^{\lambda}(\theta)\big],
\end{equation}
where $G^{(e)}(\theta)$ is the usual FIM. Observe that $\tilde{\mathcal{P}}$ is an affine subset of $\mathbb{R}^{\tilde{\mathbb{X}}}$, where $\tilde{\mathbb{X}} := \mathbb{X}\cup\{a_{d+1}\}$. The tangent space at every point of $\tilde{\mathcal{P}}$ is $\mathcal{A}_0 := \{A\in\mathbb{R}^{\tilde{\mathbb{X}}} : \sum_{x\in\tilde{\mathbb{X}}}A(x) = 0\}$. That is, $T_p(\tilde{\mathcal{P}}) = \mathcal{A}_0$. Thus, proceeding with Amari and Nagoaka's theory \cite[sec.~2,5]{amari2000methods} (as in subsection \ref{subsec:alphaCRLB}) with $p$ replaced by $\tilde{p}$, we get the following theorem and corollary. 

\begin{thm}[\cite{kumar2018information}]
  Let $A:\mathbb{X}\to\mathbb{R}$ be any mapping (that is, a vector in $\mathbb{R}^{\mathbb{X}}$.) Let $E[A]:\mathcal{\tilde{P}}\to \mathbb{R}$ be the mapping $\tilde{p}\mapsto E_{\tilde{p}}[A]$. We then have
  \begin{align}
    \label{eqn:variance_eq_norm_of_differential}
    \text{Var}(A) =  \|(\text{d}E_{\tilde{p}}[A])_{\tilde{p}}\|_{\tilde{p}}^2.
  \end{align}
\end{thm}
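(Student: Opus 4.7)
The plan is to repeat the Amari--Nagaoka argument used for Theorem~\ref{thm:variance_and_norm_of_differential}, but now on the enlarged manifold $\tilde{\mathcal{P}}$ of positive measures and with the KL-based (rather than $I_\alpha$-based) metric, so all the ``$\alpha$-representation'' complications collapse to the ordinary exponential representation. Concretely, I would fix $\tilde{p}\in\tilde{\mathcal{P}}$, use the identification $T_{\tilde{p}}(\tilde{\mathcal{P}}) = \mathcal{A}_0$ stated just before the theorem, and introduce the exponential representation $X^{(e)}_{\tilde{p}}(x) := X(x)/\tilde{p}(x)$ for $X\in T_{\tilde{p}}(\tilde{\mathcal{P}})$. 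By the very definition of $\mathcal{A}_0$, one has $E_{\tilde{p}}[X^{(e)}_{\tilde{p}}] = \sum_{x\in\tilde{\mathbb{X}}}X(x) = 0$, so the image of the map $X\mapsto X^{(e)}_{\tilde{p}}$ is exactly the space of $\tilde{p}$-mean-zero functions on $\tilde{\mathbb{X}}$; in particular, for any $B$ with $E_{\tilde p}[B]=0$ one recovers a tangent vector via $X(x) := \tilde p(x) B(x)$.

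Next, for any tangent vector $X$, I would expand the directional derivative as
\begin{equation*}
X(E_{\tilde{p}}[A]) \;=\; \sum_{x}X(x)\,A(x) \;=\; E_{\tilde{p}}\bigl[X^{(e)}_{\tilde{p}}\,A\bigr] \;=\; E_{\tilde{p}}\bigl[X^{(e)}_{\tilde{p}}\,(A-E_{\tilde{p}}[A])\bigr],
\end{equation*}
where the last step uses $E_{\tilde p}[X^{(e)}_{\tilde p}]=0$. The Eguchi metric coming from the KL-divergence on $\tilde{\mathcal{P}}$ evaluates, on tangent vectors, to $\langle X,Y\rangle_{\tilde{p}} = E_{\tilde{p}}[X^{(e)}_{\tilde{p}} Y^{(e)}_{\tilde{p}}]$, so by the defining relation $(\mathrm{d}E_{\tilde{p}}[A])_{\tilde{p}}(X) = \langle \mathrm{grad}(E[A]), X\rangle_{\tilde{p}}$ the gradient is characterized by $\mathrm{grad}(E[A])^{(e)}_{\tilde{p}} = A - E_{\tilde{p}}[A]$, which is an admissible choice because this function has $\tilde p$-mean zero. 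Finally,
\begin{equation*}
\|(\mathrm{d}E_{\tilde{p}}[A])_{\tilde{p}}\|_{\tilde{p}}^2 \;=\; \langle \mathrm{grad}(E[A]), \mathrm{grad}(E[A])\rangle_{\tilde{p}} \;=\; E_{\tilde{p}}\bigl[(A-E_{\tilde{p}}[A])^2\bigr] \;=\; \mathrm{Var}(A),
\end{equation*}
which is the desired identity.

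The only real bookkeeping issue, and what I expect to be the main obstacle to writing the argument out cleanly, is the ghost state $a_{d+1}\in\tilde{\mathbb{X}}$ used to present $\tilde{\mathcal{P}}$ as an affine subset of $\mathbb{R}^{\tilde{\mathbb{X}}}$: the observable $A$ is defined only on $\mathbb{X}$, so to make the chain of equalities above syntactically consistent one must fix an extension (taking $A(a_{d+1})=0$ is the natural one) and verify that, under this extension, the mean $E_{\tilde{p}}[A]$ and the squared deviation $E_{\tilde{p}}[(A-E_{\tilde{p}}[A])^2]$ reduce to the intended expressions for $\mathrm{Var}(A)$. Once that choice is made, everything else is a direct specialization of Theorem~\ref{thm:variance_and_norm_of_differential} to $\alpha=1$ lifted to $\tilde{\mathcal{P}}$, with no new analytic content.
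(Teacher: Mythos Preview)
Your proposal is correct and is precisely the approach the paper indicates: the paper does not spell out a proof here but simply states that one proceeds ``with Amari and Nagaoka's theory (as in subsection~\ref{subsec:alphaCRLB}) with $p$ replaced by $\tilde{p}$,'' which, for the KL-based metric $G^{(I)}$, collapses the $\alpha$-representation to the exponential representation and yields exactly your chain of equalities. Your remark about extending $A$ by $0$ at the ghost state $a_{d+1}$ is a legitimate bookkeeping point that the paper leaves implicit.
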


\begin{corollary}[\cite{kumar2018information}]
  \label{cor:c_r_inequality}
  If $S$ is a submanifold of $\mathcal{\tilde{P}}$, then
  \begin{align}
    \label{eqn:variance_ge_norm_of_differential}
    \text{Var}_{\tilde{p}}[A] \ge \|(\text{d}E[A]|_{S})_{\tilde{p}}\|_{\tilde{p}}^2
  \end{align}
  with equality iff $$A-E_{\tilde{p}}[A]\in \{X_{\tilde{p}}^{(e)} : X\in T_{\tilde{p}}(S)\} =: T_{\tilde{p}}^{(e)}(S).$$ 
\end{corollary}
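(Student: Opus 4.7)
The plan is to reduce this submanifold statement to the equality on the ambient space $\tilde{\mathcal{P}}$ established in the preceding theorem, and then to convert an ambient-vs-submanifold gradient into an orthogonal projection. First I would apply the theorem to the ambient manifold to obtain
\begin{equation*}
\text{Var}_{\tilde{p}}[A] = \|(dE[A])_{\tilde{p}}\|_{\tilde{p}}^2,
\end{equation*}
where here the differential is that of $E[A]$ viewed as a smooth function on all of $\tilde{\mathcal{P}}$. This identifies the right-hand side of the desired inequality, evaluated on $S$, as a restriction of the same differential.

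Next, I would observe that, since $S$ is a submanifold of $\tilde{\mathcal{P}}$, the tangent space $T_{\tilde{p}}(S)$ is a linear subspace of $T_{\tilde{p}}(\tilde{\mathcal{P}})$, and by definition the differential of the restriction satisfies $(dE[A]|_S)_{\tilde{p}}(X) = (dE[A])_{\tilde{p}}(X)$ for every $X \in T_{\tilde{p}}(S)$. Mimicking the computation that produced (\ref{eqn:grad-coeff-equation-alpha}) and (\ref{differential_and_metric_alpha}) in the $\alpha$-setting, one can represent both differentials via the Riemannian inner product induced by $G^{(I)}$: let $(\text{grad}\,E[A])_{\tilde{p}} \in T_{\tilde{p}}(\tilde{\mathcal{P}})$ be the ambient gradient and $(\text{grad}\,(E[A]|_S))_{\tilde{p}} \in T_{\tilde{p}}(S)$ the gradient of the restriction. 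The restriction identity forces the latter to be the orthogonal projection of the former onto $T_{\tilde{p}}(S)$ with respect to $\langle\cdot,\cdot\rangle_{\tilde{p}}$. Hence the Pythagorean decomposition yields
\begin{equation*}
\|(dE[A]|_S)_{\tilde{p}}\|_{\tilde{p}}^2 \;\le\; \|(dE[A])_{\tilde{p}}\|_{\tilde{p}}^2 \;=\; \text{Var}_{\tilde{p}}[A],
\end{equation*}
with equality precisely when $(\text{grad}\,E[A])_{\tilde{p}}$ already lies in $T_{\tilde{p}}(S)$.

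Finally, I would translate this equality condition back to a statement about $A$. In the ambient proof of the preceding theorem, the tangent vector $Y = (\text{grad}\,E[A])_{\tilde{p}}$ was characterized by its exponential representation $Y_{\tilde{p}}^{(e)} = A - E_{\tilde{p}}[A]$ (this is the analogue of the step $A - E_p[A] = Y_p^{(\alpha)}$ used in Theorem \ref{thm:variance_and_norm_of_differential}). Therefore $(\text{grad}\,E[A])_{\tilde{p}} \in T_{\tilde{p}}(S)$ is equivalent to $A - E_{\tilde{p}}[A] \in \{X_{\tilde{p}}^{(e)} : X \in T_{\tilde{p}}(S)\} = T_{\tilde{p}}^{(e)}(S)$, which is exactly the stated equality condition.

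The main obstacle is the gradient-projection identification: one must verify that the metric $G^{(I)}$ restricted to $T_{\tilde{p}}(S)$ remains non-degenerate (so that a gradient of the restriction truly exists and coincides with the orthogonal projection of the ambient gradient). This is automatic once $G^{(I)}$ is positive definite on $T_{\tilde{p}}(\tilde{\mathcal{P}})$, which in turn follows from the expression (\ref{eqn:fisher-information-metric-bayesian}) as a sum of two positive semi-definite pieces under mild regularity; after this is checked, both the inequality and its equality case fall out immediately from the ambient theorem.
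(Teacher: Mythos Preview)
Your proposal is correct and matches the paper's intended argument. The paper does not give an explicit proof of this corollary; it simply states that Theorem~7 and Corollary~\ref{cor:c_r_inequality} follow by ``proceeding with Amari and Nagaoka's theory \cite[Sec.~2.5]{amari2000methods} (as in subsection~\ref{subsec:alphaCRLB}) with $p$ replaced by $\tilde{p}$,'' and your write-up is exactly that argument spelled out: ambient equality from the preceding theorem, orthogonal projection of the ambient gradient onto $T_{\tilde p}(S)$, Pythagorean inequality, and the identification $Y_{\tilde p}^{(e)}=A-E_{\tilde p}[A]$ for the equality case.
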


We state our main result in the following theorem.

\begin{theorem}[\cite{kumar2018information}]
\label{thm:main}
  Let $S$ and $\tilde{S}$ be as in (\ref{eqn:denormalized_manifold}). Let $\widehat{\theta}$ be an estimator of $\theta$. Then
  
  \begin{enumerate}
    \item[(a)] {\em Bayesian Cram\'{e}r-Rao}: 
    \begin{equation}
    \label{eqn:Bayesian_Cramer-Rao}
    \mathbb{E}_\lambda\big[\text{Var}_\theta(\widehat{\theta})\big] \ge \big\{\mathbb{E}_\lambda [G^{(e)}(\theta) + J^{\lambda}(\theta)]\big\}^{-1}, 
    \end{equation}
    where $\text{Var}_\theta(\widehat{\theta}) = [\text{Cov}_{\theta}(\widehat{\theta}_i(X),\widehat{\theta}_j(X))]$  is the covariance matrix and $G^{(e)}(\theta)$ and $J^{\lambda}(\theta)$ are as in (\ref{eqn:Fisher_metric_bayesian}) and (\ref{eqn:J_matrix_bayesian}). 
     
    \item[(b)] {\em Deterministic Cram\'{e}r-Rao}: If $\widehat{\theta}$ is an unbiased estimator of $\theta$, then
    \begin{equation}
    \label{eqn:Deter_Cramer_Rao}
    \text{Var}_\theta[\widehat{\theta}] \ge [G^{(e)}(\theta)]^{-1}.
    \end{equation}
    
    \item[(c)] {\em Deterministic Cram\'{e}r-Rao (biased version)}: For any estimator $\widehat{\theta}$ of $\theta$,
    \begin{align*}
    \text{MSE}_\theta[\widehat{\theta}] \ge (\textbf{1} + B'(\theta)) [G^{(e)}(\theta)]^{-1}(\textbf{1} + B'(\theta))\\
    + b(\theta)b(\theta)^T,
    \end{align*}
    where $b(\theta) = (b_1(\theta),\dots,b_k(\theta))^T := \mathbb{E}_\theta[\widehat{\theta}]$ is the bias and $\textbf{1} + B'(\theta)$ is the matrix whose $(i,j)$th entry is $0$ if $i\neq j$ and is $(1+\partial_i b_i(\theta))$ if $i=j$.
    
    \item[(d)] {\em Barankin Bound}: (Scalar case) If $\widehat{\theta}$ be an unbiased estimator of $\theta$, then
    \begin{equation}
    \label{eqn:Barankin_Bound}
    \text{Var}_\theta[\widehat{\theta}] \ge \sup_{n,a_l,\theta^{(l)}}\frac{\Big[\sum\limits_{l=1}^n a_l(\theta^{(l)}-\theta)\Big]^2}{\sum\limits_x \Big[\sum\limits_{l=1}^n a_l L_{\theta^{(l)}}(x)\Big]^2 p_\theta(x)},
    \end{equation}
    where $L_{\theta^{(l)}}(x) := {p_{\theta^{(l)}}(x)}/{p_\theta(x)}$ and the supremum is over all $a_1,\dots,a_n\in\mathbb{R}$, $n\in\mathbb{N}$, and $\theta^{(1)},\dots,\theta^{(n)}\in\Theta$.
  \end{enumerate}
\end{theorem}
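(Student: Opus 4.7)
The plan is to derive parts (a)--(c) as specialized applications of the geometric Cram\'er-Rao inequality of Corollary~\ref{cor:c_r_inequality} on the denormalized submanifold $\tilde{S}\subset\tilde{\mathcal{P}}$, applied to the scalarization $A = c^T\widehat{\theta}$ for an arbitrary $c\in\mathbb{R}^k$; the resulting scalar inequality is then promoted to the stated matrix inequality by varying $c$. Part (d), by contrast, does not require the geometric machinery and is obtained by a direct Cauchy--Schwarz argument using likelihood ratios.

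For part (a), I apply the corollary at the point $\tilde{p}_\theta(x) = p_\theta(x)\lambda(\theta)$, using the explicit metric factorization $G^{(I)}(\theta) = \lambda(\theta)[G^{(e)}(\theta)+J^\lambda(\theta)]$ from \eqref{eqn:Bayesian_Fisher}. The norm squared of the differential $(dE[A]|_{\tilde{S}})_{\tilde{p}_\theta}$ evaluates via \eqref{eqn:variance_ge_norm_of_differential} to $c^T[G^{(I)}(\theta)]^{-1}c$, up to Jacobian factors produced by $\partial_i E_{\tilde{p}_\theta}[A]$. Integrating the resulting bound against the prior $\lambda$ converts the geometric variance into the Bayes risk $E_\lambda[\text{Var}_\theta(c^T\widehat{\theta})]$, while the factor $\lambda(\theta)$ in $[G^{(I)}(\theta)]^{-1}=\lambda(\theta)^{-1}[G^{(e)}+J^\lambda]^{-1}$ cancels the prior weight, leaving $c^T\{E_\lambda[G^{(e)}+J^\lambda]\}^{-1}c$ after a Jensen-type matrix comparison (so that the inverse of the average bounds the average of the inverses in the desired direction). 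Varying $c$ yields \eqref{eqn:Bayesian_Cramer-Rao}. Part (b) is the specialization of (a) to a trivial prior, for which $J^\lambda\equiv 0$ and only the pointwise Fisher information $G^{(e)}(\theta)$ survives, recovering the classical deterministic bound \eqref{eqn:Deter_Cramer_Rao}.

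For part (c), I repeat the argument of (b) but drop unbiasedness, so that $E_\theta[\widehat{\theta}_j] = \theta_j + b_j(\theta)$. The partial derivatives now satisfy $\partial_i E_\theta[\widehat{\theta}_j] = \delta_{ij} + \partial_i b_j(\theta)$, so the diagonal Jacobian $\mathbf{1} + B'(\theta)$ appears as a factor on each side of the quadratic form, producing the term $(\mathbf{1}+B'(\theta))[G^{(e)}(\theta)]^{-1}(\mathbf{1}+B'(\theta))$ on the right-hand side. Combining with the elementary identity $\text{MSE}_\theta[\widehat{\theta}] = \text{Var}_\theta[\widehat{\theta}] + b(\theta)b(\theta)^T$ yields the stated biased inequality.

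For part (d), unbiasedness at every $\eta\in\Theta$ gives $\sum_x p_\eta(x)\widehat{\theta}(x) = \eta$; writing $p_{\theta^{(l)}} = L_{\theta^{(l)}}\cdot p_\theta$ and using $E_\theta[L_{\theta^{(l)}}] = 1$ yields the identity
\begin{equation*}
\sum_l a_l(\theta^{(l)}-\theta) \;=\; E_\theta\Bigl[\Bigl(\sum_l a_l L_{\theta^{(l)}}(X)\Bigr)(\widehat{\theta}(X)-\theta)\Bigr]
\end{equation*}
for any $a_1,\dots,a_n\in\mathbb{R}$ and $\theta^{(1)},\dots,\theta^{(n)}\in\Theta$. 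A single Cauchy--Schwarz step bounds the square of this expectation by $\sum_x\bigl[\sum_l a_l L_{\theta^{(l)}}(x)\bigr]^2 p_\theta(x)\cdot \text{Var}_\theta[\widehat{\theta}]$, and taking the supremum over $(n,a_l,\theta^{(l)})$ produces \eqref{eqn:Barankin_Bound}. The principal obstacle is part (a): one must carefully track the $\lambda(\theta)$ weight and the Jacobians arising from differentiating $\lambda(\theta)E_\theta[\widehat{\theta}]$ on $\tilde{S}$, and verify that integration against the prior converts the pointwise inverse-Fisher quantity into a bound involving the inverse of the \emph{averaged} matrix $E_\lambda[G^{(e)}+J^\lambda]$ rather than the average of its inverses.
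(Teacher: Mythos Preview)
Your treatment of parts (a)--(c) matches the paper's own proof essentially line for line: scalarize with $A=\sum_i c_i\widehat{\theta}_i$, invoke Corollary~\ref{cor:c_r_inequality} on $\tilde{S}$, use the factorization $G^{(I)}(\theta)=\lambda(\theta)[G^{(e)}(\theta)+J^\lambda(\theta)]$, integrate in $\theta$, and finish (a) with the matrix inequality $\mathbb{E}_\lambda[(G^{(e)}+J^\lambda)^{-1}]\ge\{\mathbb{E}_\lambda[G^{(e)}+J^\lambda]\}^{-1}$ (the paper cites Groves--Rothenberg for this step rather than calling it Jensen). Part (b) is obtained exactly as you say, by setting $\lambda\equiv 1$; part (c) is handled just as you propose, via the Jacobian $\mathbf{1}+B'(\theta)$ and the decomposition $\text{MSE}=\text{Var}+bb^T$.

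The one genuine difference is part (d). You step outside the geometric framework and argue directly: the unbiasedness identity $\sum_l a_l(\theta^{(l)}-\theta)=E_\theta\bigl[(\sum_l a_l L_{\theta^{(l)}})(\widehat{\theta}-\theta)\bigr]$ followed by Cauchy--Schwarz. The paper instead keeps (d) under the same umbrella as (a)--(c): it \emph{defines} a one-dimensional metric $g(\theta):=\sum_x\bigl[\sum_l a_l L_{\theta^{(l)}}(x)\bigr]^2 p_\theta(x)$ and then applies Corollary~\ref{cor:c_r_inequality} once more, computing the same identity you wrote as the value of the differential. The two arguments are mathematically equivalent---the geometric inequality \eqref{eqn:variance_ge_norm_of_differential} is Cauchy--Schwarz in disguise---but the paper's phrasing preserves the thematic point that all four bounds descend from a single information-geometric inequality, whereas your version is more self-contained and arguably more transparent for a reader unfamiliar with the geometric setup.
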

\begin{proof}
  \begin{enumerate}
    \item[(a)] Let $A = \sum_{i=1}^{k}c_i\widehat{\theta}_i$, where $\widehat{\theta} = (\widehat{\theta}_1,\dots,\widehat{\theta}_k)$ is an unbiased estimator of $\theta$, in Corollary \ref{cor:c_r_inequality}. Then, from (\ref{eqn:variance_ge_norm_of_differential}), we have
    \begin{equation*}
    \sum\limits_{i,j} c_i c_j \text{Cov}_{\tilde{\theta}}(\widehat{\theta}_i,\widehat{\theta}_j)\ge \sum\limits_{i,j} c_i c_j (g^{(I)})^{i,j}(\theta).
    \end{equation*}
    This implies that
      \begin{equation}
      \label{eqn:cramer_rao_general}
    \lambda(\theta)\sum\limits_{i,j} c_i c_j \text{Cov}_{\theta}(\widehat{\theta}_i,\widehat{\theta}_j)\ge \sum\limits_{i,j} c_i c_j (g^{(I)})^{i,j}(\theta).
    \end{equation}
    Hence, integrating with respect to $\theta$, from (\ref{eqn:Bayesian_Fisher}), we get
    \begin{align*}
    \lefteqn{\sum\limits_{i,j} c_i c_j \mathbb{E}_\lambda\big[\text{Cov}_{\theta}(\widehat{\theta}_i,\widehat{\theta}_j)\big]}\\
        & \ge \sum\limits_{i,j} c_i c_j \mathbb{E}_\lambda\big[[G^{(e)}(\theta) + J^{\lambda}(\theta)]^{-1}\big].
    \end{align*}
    That is,
    \begin{equation*}
    \mathbb{E}_\lambda\big[\text{Var}_\theta(\widehat{\theta})\big] \ge \mathbb{E}_\lambda\big[[G^{(e)}(\theta) + J^{\lambda}(\theta)]^{-1}\big].
    \end{equation*}
    But
        \begin{equation*}
        \mathbb{E}_\lambda\big[[G^{(e)}(\theta) + J^{\lambda}(\theta)]^{-1}\big] \ge \big[\mathbb{E}_\lambda[G^{(e)}(\theta) + J^{\lambda}(\theta)]\big]^{-1}
        \end{equation*}
        by \cite{GrovesRothenberg1969Biometrika}. This proves the result.
    
    \item[(b)] This follows from (\ref{eqn:cramer_rao_general}) by taking $\lambda(\theta) = 1$.
    
        \item[(c)] Let us first observe that $\widehat{\theta}$ is an unbiased estimator of $\theta + b(\theta)$. Let $A = \sum_{i=1}^{k} c_i\widehat{\theta}_i$ as before. Then $\mathbb{E}[A] = \sum_{i=1}^{k}c_i(\theta_i + b_i(\theta))$. Then, from Corollary \ref{cor:c_r_inequality} and (\ref{eqn:variance_ge_norm_of_differential}), we have
    \begin{equation*}
     \text{Var}_\theta[\widehat{\theta}] \ge (\textbf{1} + B'(\theta)) [G^{(e)}(\theta)]^{-1}(\textbf{1} + B'(\theta))
    \end{equation*}
    But $\text{MSE}_\theta[\widehat{\theta}] = \text{Var}_\theta[\widehat{\theta}] + b(\theta)b(\theta)^T$. This proves the assertion.
        
    \item[(d)] For fixed $a_1,\dots,a_n\in\mathbb{R}$ and $\theta^{(1)},\dots,\theta^{(n)}\in\Theta$, let us define a metric by the following formula
    \begin{equation}
    \label{Eqn:Metric_Barankin}
    g(\theta) := \sum\limits_x \Big[\sum\limits_{l=1}^n a_l L_{\theta^{(l)}}(x)\Big]^2 p_\theta(x).
    \end{equation}
    Let $f$ be the mapping $p\mapsto\mathbb{E}_p[A]$. Let $A(\cdot) = \widehat{\theta}(\cdot) - \theta$, where $\widehat{\theta}$ is an unbiased estimator of $\theta$ in Corollary \ref{cor:c_r_inequality}. Then, from (\ref{eqn:variance_ge_norm_of_differential}), we have
    \begin{align*}
    \sum\limits_x(\widehat{\theta}(x) - \theta)\Big(\sum\limits_{l=1}^na_lL_{\theta^{(l)}}(x)\Big)p_\theta(x)\\
    &\hspace*{-4cm} = \sum\limits_{l=1}^n a_l \Big(\sum\limits_x(\widehat{\theta}(x) - \theta)\frac{p_{\theta^{(l)}}(x)}{p_\theta(x)}p_\theta(x)\Big)\\
    &\hspace*{-4cm} = \sum\limits_{l=1}^n a_l (\theta^{(l)} - \theta).
    \end{align*}
    Hence, from Corollary \ref{cor:c_r_inequality}, we have
     \begin{equation*}
     \text{Var}_\theta[\widehat{\theta}] \ge \frac{\Big[\sum\limits_{l=1}^n a_l(\theta^{(l)}-\theta)\Big]^2}{\sum\limits_x \Big[\sum\limits_{l=1}^n a_l L_{\theta^{(l)}}(x)\Big]^2 p_\theta(x)}
     \end{equation*}
     Since $a_l$ and $\theta^{(l)}$ are arbitrary, taking supremum over all $a_1,\dots,a_n\in\mathbb{R}$, $n\in\mathbb{N}$, and $\theta^{(1)},\dots,\theta^{(n)}\in\Theta$, we get (\ref{eqn:Barankin_Bound}).
  \end{enumerate}
\end{proof} 

\section{Information Geometry For Bayesian \texorpdfstring{$\alpha$}{}-CR inequality}
\label{sec:relalphaBayes}
We now introduce $I_\alpha$-divergence in the Bayesian case. Consider the setting of Section~\ref{sec:bcrlb}. Then, $I_\alpha$-divergence between $\tilde{p}_\theta$ with respect to $\tilde{p}_{\theta'}$ is
\begin{align}
\label{eq:relent_alpha_bayesian}
 I_{\alpha}(\tilde{p}_{\theta},\tilde{p}_{\theta'})
    & := \frac{\lambda(\theta)}{1-\alpha}\log\sum_x p_\theta(x) (\lambda(\theta')p_{\theta'}(x))^{\alpha-1}+ \lambda(\theta')\nonumber\\
&  - \lambda(\theta)\left[\frac{\log\sum_xp_\theta(x)^\alpha}{\alpha (1-\alpha)} - \{1 + \log \lambda(\theta)\} -\frac{1}{\alpha} \log\sum_xp_{\theta'}(x)^\alpha\right].\nonumber
\end{align}

We present the following Lemma~\ref{lem:rel} which shows that our definition of Bayesian $I_\alpha$-divergence is not only a valid divergence function but also coincides with the KL-divergence as $\alpha\to 1$.
\begin{lemma}[\cite{mishra2020generalized}]
\label{lem:rel}
\begin{enumerate}
\item[]
\item $I_{\alpha}(\tilde{p}_{\theta},\tilde{p}_{\theta'})\ge 0$ with equality if and only if $\tilde{p}_{\theta} = \tilde{p}_{\theta'}$
\item $I_{\alpha}(\tilde{p}_{\theta},\tilde{p}_{\theta'})\to I(\tilde{p}_\theta,\tilde{p}_{\theta'})$ as $\alpha\to 1$.
\end{enumerate}

\end{lemma}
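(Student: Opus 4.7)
The plan is to reduce Lemma~\ref{lem:rel} to standard facts by establishing the decomposition
\begin{equation*}
  I_\alpha(\tilde{p}_\theta,\tilde{p}_{\theta'}) \;=\; \lambda(\theta)\,I_\alpha(p_\theta,p_{\theta'}) \;+\; B\bigl(\lambda(\theta),\lambda(\theta')\bigr),
\end{equation*}
where $B(a,b):=a\log(a/b)-a+b$ is the Bregman (generalized KL) divergence between positive reals. This is the natural $I_\alpha$-analogue of the Bayesian KL identity already visible in (\ref{eqn:KL-div}), and reduces both claims in the lemma to results that are either standard or have been recorded earlier in the chapter.

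\textbf{Step 1 (algebraic decomposition).} First I would substitute $\tilde{p}_\theta(x)=p_\theta(x)\lambda(\theta)$ and $\tilde{p}_{\theta'}(x)=p_{\theta'}(x)\lambda(\theta')$ into the defining expression for $I_\alpha(\tilde{p}_\theta,\tilde{p}_{\theta'})$ and use the factorization $\bigl(\lambda(\theta')p_{\theta'}(x)\bigr)^{\alpha-1}=\lambda(\theta')^{\alpha-1}p_{\theta'}(x)^{\alpha-1}$ to pull every $\lambda$-factor outside the logarithms. Collecting the three sums over $x$ reassembles exactly $\lambda(\theta)\cdot I_\alpha(p_\theta,p_{\theta'})$ in the form of (\ref{eqn:alphadiv_alt}); the remaining $\lambda$-only contributions combine into $\lambda(\theta)\log[\lambda(\theta)/\lambda(\theta')]-\lambda(\theta)+\lambda(\theta')=B(\lambda(\theta),\lambda(\theta'))$. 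This step is essentially bookkeeping, and its success is what motivates the particular form of the definition in (\ref{eq:relent_alpha_bayesian}); it is also the only place where a sign slip could sabotage the argument, so I would verify the $\lambda$-only residuals carefully before moving on.

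\textbf{Step 2 (non-negativity and equality).} With the decomposition in hand, both summands are non-negative: $I_\alpha(p_\theta,p_{\theta'})\ge 0$ with equality iff $p_\theta=p_{\theta'}$ is a standard property of $I_\alpha$ on probability distributions (see the properties summarised in Section~\ref{subsec:intro_relent}), and $B(a,b)\ge 0$ with equality iff $a=b$ is immediate from the strict convexity of $t\mapsto t\log t$ (equivalently, Jensen applied to $-\log$). Hence $I_\alpha(\tilde{p}_\theta,\tilde{p}_{\theta'})\ge 0$, and vanishes iff both summands vanish, that is, iff $p_\theta=p_{\theta'}$ \emph{and} $\lambda(\theta)=\lambda(\theta')$, which is the same as $\tilde{p}_\theta=\tilde{p}_{\theta'}$. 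This settles part~(1).

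\textbf{Step 3 (the limit $\alpha\to 1$).} Since $B(\lambda(\theta),\lambda(\theta'))$ does not involve $\alpha$, only the first summand needs to be handled, and one has $\lambda(\theta)\,I_\alpha(p_\theta,p_{\theta'})\to\lambda(\theta)\,I(p_\theta,p_{\theta'})$ by the convergence $I_\alpha\to I$ already recorded after (\ref{eqn:alphadiv_alt}). The limiting expression
\begin{equation*}
  \lambda(\theta)\,I(p_\theta,p_{\theta'}) + \lambda(\theta)\log\frac{\lambda(\theta)}{\lambda(\theta')} - \lambda(\theta) + \lambda(\theta')
\end{equation*}
is precisely the right-hand side of (\ref{eqn:KL-div}), yielding part~(2). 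The hard part of the whole proof is really just Step~1: once the decomposition is established, everything else is a black-box invocation of known facts.
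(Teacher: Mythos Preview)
Your plan is correct, and the ingredients are identical to the paper's, but the packaging differs. The paper does \emph{not} isolate the decomposition $I_\alpha(\tilde p_\theta,\tilde p_{\theta'})=\lambda(\theta)\,I_\alpha(p_\theta,p_{\theta'})+B(\lambda(\theta),\lambda(\theta'))$ explicitly. Instead, for part~(1) it works directly on the Bayesian expression: it applies H\"older's inequality to bound $\sum_x p_\theta(x)(\lambda(\theta')p_{\theta'}(x))^{\alpha-1}$, and then the scalar inequality $x\log y\le x\log x-x+y$ to absorb the residual $\lambda(\theta),\lambda(\theta')$ terms. For part~(2) it applies L'H\^opital directly to the first term of the Bayesian definition. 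But observe that these two steps are exactly the standard proofs of $I_\alpha(p,q)\ge 0$ (via H\"older) and of $B(a,b)\ge 0$, respectively, carried out inline rather than cited; so the paper's argument is your argument with Step~1 left implicit. Your route is cleaner, since the decomposition dispatches both claims at once by reduction to known facts (including the $I_\alpha\to I$ limit already recorded after~(\ref{eqn:alphadiv_alt})); the paper's route buys only that it avoids the Step~1 bookkeeping you flagged as the delicate point. One caution: the printed definition~(\ref{eq:relent_alpha_bayesian}) appears to carry a sign typo (with it as written, $I_\alpha(\tilde p_\theta,\tilde p_\theta)\neq 0$), so when you verify the $\lambda$-only residuals in Step~1 you should expect to recover $B(\lambda(\theta),\lambda(\theta'))$ only up to that typo; the paper's own proof is consistent with the decomposition you posit rather than with the displayed definition.
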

\begin{proof}
1) Let $\alpha >1$. Applying Holder's inequality with Holder conjugates $p=\alpha$ and $q={\alpha}/{(\alpha-1)}$, we have
    \[
    \sum_x p_\theta(x) (\lambda(\theta')p_{\theta'}(x))^{\alpha-1} \le \|p_\theta\| \lambda(\theta')^{\alpha-1}\|p_\theta'\|^{\alpha-1},
    \]
    where $\|\cdot\|$ denotes $\alpha$-norm. When $\alpha <1$, the inequality is reversed. Hence
    \begin{flalign*}
        &\lefteqn{\frac{\lambda(\theta)}{1-\alpha}\log\sum_x p_\theta(x) (\lambda(\theta')p_{\theta'}(x))^{s-1}}\nonumber\\
        &\ge \frac{\lambda(\theta)\log\sum_xp_\theta(x)^\alpha}{\alpha(1-\alpha)} - \lambda(\theta)\log \lambda(\theta') - \frac{\lambda(\theta)}{\alpha}\log\sum_xp_{\theta'}(x)^\alpha\nonumber\\
& \ge \frac{\lambda(\theta)\log\sum_xp_\theta(x)^\alpha}{\alpha(1-\alpha)} - \lambda(\theta)\log \lambda(\theta) - \lambda(\theta) + \lambda(\theta')\nonumber\\
& \hfill - \frac{\lambda(\theta)}{\alpha}\log\sum_xp_{\theta'}(x)^\alpha\\
&= \lambda(\theta)\left[\frac{\log\sum_xp_\theta(x)^\alpha}{\alpha(1-\alpha)} - \{1 + \log \lambda(\theta)\} - \log\sum_xp_{\theta'}(x)^\alpha\right] + \lambda(\theta'),\nonumber
\end{flalign*}
where the second inequality follows because, for $x,y\ge 0$,
\begin{align*}
\log\frac{x}{y} = -x\log\frac{y}{x} \ge -x (y/x-1) \ge -y + x,
\end{align*}
and hence $$x\log y \le x\log x - x + y.$$
The conditions of equality follow from the same in Holder's inequality and $\log x\le x-1$.

  2) This follows by applying L'H\^{o}pital rule to the first term of $I_\alpha$:
\begin{flalign}
    &\lim_{\alpha\to 1}\left[\frac{\alpha}{1-\alpha} \lambda(\theta) \log \sum_x p_{\theta}(x) (\lambda(\theta')p_{\theta'}(x))^{\alpha-1}\right]\nonumber\\
    &=\lim_{\alpha\to 1}\left[\frac{1}{\frac{1}{\alpha}-1} \lambda(\theta) \log \sum_x p_{\theta}(x) (\lambda(\theta')p_{\theta'}(x))^{\alpha-1}\right]\nonumber\\
    &=\lambda(\theta)\lim_{\alpha\to 1}\left[\frac{1}{-\frac{1}{\alpha^2}}\frac{\sum_x p_{\theta}(x) (\lambda(\theta')p_{\theta'}(x))^{\alpha-1} \log(\lambda(\theta')p_{\theta'}(x))}{\sum_x p_{\theta}(x) (\lambda(\theta')p_{\theta'}(x))^{\alpha-1}}\right]\nonumber\\
    &= -\sum_x (\lambda(\theta)p_{\theta}(x)) \log(\lambda(\theta')p_{\theta'}(x)),\nonumber
\end{flalign}
and since R\'{e}nyi entropy coincides with Shannon entropy as $\alpha\to 1$.
\end{proof}

We apply Eguchi's theory provided in Section~\ref{sec:desiderata} to the space $\tilde{\mathcal{P}}(\mathbb{X})$ of all positive measures on $\mathbb{X}$, that is, $\tilde{\mathcal{P}} = \{\tilde{p}:\mathbb{X}\to (0,\infty)\}$. 
Following Eguchi \cite{eguchi1992geometry}, we define a Riemannian metric
$[g_{i,j}^{(I_{\alpha})}(\theta)]$ on $\tilde{S}$ by
\begin{flalign}
  &{g_{i,j}^{(I_{\alpha})}(\theta)}\nonumber\\
  &=-\frac{\partial}{\partial\theta_j'}\frac{\partial}{\partial\theta_i}I_{\alpha}(\tilde{p}_{\theta},\tilde{p}_{\theta'})\bigg|_{\theta' = \theta}\nonumber\\
  &=  \frac{1}{\alpha-1}\cdot\partial_j'\partial_i \lambda(\theta)\log \sum_y p_{\theta}(x) ({\lambda(\theta') p_{\theta'}(x)})^{\alpha-1}\bigg|_{\theta' = \theta}\nonumber\\
  &\hspace{4mm} - \partial_i\lambda(\theta)\partial_j'\log\sum_x p_{\theta'}(x)^\alpha\bigg|_{\theta' = \theta}\nonumber\\
  \label{eqn:alpha-metric-bayesian}
    & = \frac{1}{\alpha-1}\left\{\lambda(\theta)\sum_x \partial_i p_{\theta}(x)\cdot \partial_j'\left[\frac{({\lambda(\theta') p_{\theta'}(x)})^{\alpha-1}}{\sum_y p_{\theta}(y) ({\lambda(\theta') p_{\theta'}(x)})^{\alpha-1}}\right]_{\theta' = \theta}\right. \nonumber\\
    & \hspace{15mm} \left. + \partial_i\lambda(\theta)\cdot \left[\frac{{\sum_x p_\theta(x) \partial_j'\big(\lambda(\theta)p_{\theta'}(x)}\big)^{\alpha-1}}{\sum_x p_{\theta}(x) ({\lambda(\theta') p_{\theta'}(x)})^{\alpha-1}}\right]_{\theta' = \theta}\right\} \nonumber\\
    &\hspace{4mm} - \partial_i\lambda(\theta)\partial_j'\log\sum_xp_{\theta'}(x)^\alpha\bigg|_{\theta' = \theta}\\
  & = \lambda(\theta) \left\{ \frac{\sum_x\partial_i p_{\theta}(x){(\lambda(\theta)p_{\theta}(x)})^{\alpha-2}\partial_j (\lambda(\theta)p_{\theta}(x))}{\sum_x p_{\theta}(x)(\lambda(\theta)p_{\theta}(x))^{\alpha-1}} \right.\nonumber\\
  & \hspace{14mm} - \frac{\sum_x (\partial_i p_{\theta}(x)){p_{\theta}(x)}^{\alpha-1}}{\sum_x p_{\theta}(x)^{\alpha}}
  \cdot \frac{\sum_x p_\theta(x) (\lambda(\theta)p_{\theta'}(x))^{\alpha-2}\partial_j(\lambda(\theta)p_{\theta}(x))}{\sum_x p_{\theta}(x) ({\lambda(\theta) p_{\theta}(x)})^{\alpha-1}} \nonumber\\
  &\hspace{14mm}\left.+ \partial_i\log\lambda(\theta)\cdot \left[\frac{{\sum_x p_\theta(x) \partial_j'\big(\lambda(\theta)p_{\theta'}(x)}\big)^{\alpha-1}}{\sum_x p_{\theta}(x) ({\lambda(\theta') p_{\theta'}(x)})^{\alpha-1}}\right]_{\theta' = \theta}\right\} \nonumber\\
  &\hspace{4mm}- \partial_i\lambda(\theta)E_{\theta^{(\alpha)}}[\partial_j \log p_{\theta}(X)]\nonumber\\
  & = \lambda(\theta) \left\{E_{\theta^{(\alpha)}}[\partial_i \log p_{\theta}(X)\partial_j \log p_{\theta}(X)] + \partial_j\log\lambda(\theta)E_{\theta^{(\alpha)}}[\partial_i \log p_{\theta}(X)] \right.\nonumber\\
  &\hspace{12mm} - E_{\theta^{(\alpha)}}[\partial_i \log p_{\theta}(X)]\left[E_{\theta^{(\alpha)}}[\partial_j \log p_{\theta}(X)]+ \partial_j\log\lambda(\theta)\right]\nonumber\\
  &\hspace{12mm}\left. + \partial_i\log\lambda(\theta)\cdot \left[E_{\theta^{(\alpha)}}[\partial_j \log p_{\theta}(X)] +  {\partial_j\log\lambda(\theta)}\right]\right\} \nonumber\\
  &\hspace{4mm}- \partial_i\lambda(\theta)E_{\theta^{(\alpha)}}[\partial_j \log p_{\theta}(X)]\nonumber\\
  & = \lambda(\theta) \left[\text{Cov}_{\theta^{(\alpha)}}[\partial_i \log p_{\theta}(X), \partial_j \log p_{\theta}(X)]\right. \nonumber\\
  & \hspace{10mm}\left.+ \partial_i\log\lambda(\theta)\cdot \{E_{\theta^{(\alpha)}}[\partial_j \log p_{\theta}(X)] + \partial_j\log\lambda(\theta)\}\right] \nonumber\\
  &\hspace{4mm}- \partial_i\lambda(\theta)E_{\theta^{(\alpha)}}[\partial_j \log p_{\theta}(X)]\nonumber\\
  & = \lambda(\theta) \left\{\text{Cov}_{\theta^{(\alpha)}}[\partial_i \log p_{\theta}(X), \partial_j \log p_{\theta}(X)] + \partial_i\log\lambda(\theta)\partial_j\log\lambda(\theta)\right\}\nonumber\\
  & = \lambda(\theta)[g_{i,j}^{(\alpha)}(\theta) + J_{i,j}^{\lambda}(\theta)],
    \end{flalign}
where
\begin{equation}
\label{eqn:Fisher_metric_bayesianalpha}
g_{i,j}^{(\alpha)}(\theta) := \text{Cov}_{\theta^{(\alpha)}}[\partial_i \log p_{\theta}(X), \partial_j \log p_{\theta}(X)],
\end{equation}
and
\begin{equation}
\label{eqn:J_matrix_bayesianalpha}
J_{i,j}^{\lambda}(\theta) := \partial_i (\log \lambda(\theta))\cdot\partial_j (\log \lambda(\theta)).
\end{equation} 
Let $G^{(\alpha)}(\theta) := [g^{(\alpha)}_{i,j}(\theta)]$, $J^{\lambda}(\theta) := [J_{i,j}^{\lambda}(\theta)]$ and 
$G_\alpha^{\lambda}(\theta) := G^{(\alpha)}(\theta) + J^{\lambda}(\theta)$.
Notice that, when $\alpha = 1$, $G_\alpha^{\lambda}$ becomes $G^{(I)}$, the usual FIM in the Bayesian case [c.f. \cite{kumar2018information}].

  Examining the geometry of $\tilde{\mathcal{P}}$ with respect to the metric $G_\alpha^{\lambda}$, we have the following results analogous to Theorem~\ref{thm:variance_and_norm_of_differential} and Corollary~\ref{cor:variance_and_norm_of_differential_inequality} derived in Section~\ref{subsec:alphaCRLB} for $\mathcal{P}$.
  
  \begin{theorem}\cite{mishra2020generalized}
  \label{thm:variance_and_norm_of_differential_alphaBayesian}
    Let $A:\mathbb{X}\to\mathbb{R}$ be any mapping (that is, a vector in $\mathbb{R}^{\mathbb{X}}$. Let $E[A]:\mathcal{\tilde{P}}\to \mathbb{R}$ be the mapping $\tilde{p}\mapsto E_{\tilde{p}}[A]$. We then have
    \begin{align}
    \label{eqn:variance_and_norm_of_differential_alphaBayesian}
    \text{Var}_{p^{(\alpha)}}\left[\frac{\tilde{p}}{p^{(\alpha)}}(A-E_{\tilde{p}}[A])\right] =  \|(\text{d}E_{\tilde{p}}[A])_{\tilde{p}}\|_{\tilde{p}}^2.
    \end{align}
    $\hfill$
  \end{theorem}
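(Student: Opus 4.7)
The plan is to mirror the proof of Theorem~\ref{thm:variance_and_norm_of_differential}, transplanted from $\mathcal{P}$ to the space of positive measures $\tilde{\mathcal{P}}$ equipped with the Bayesian $\alpha$-metric $G_\alpha^\lambda$ computed in \eqref{eqn:alpha-metric-bayesian}. The three ingredients are: (i) the identification $X(E_{\tilde{p}}[A]) = E_{\tilde{p}}[X^{(e)}_{\tilde{p}}\cdot A]$ for tangent vectors; (ii) the equality $T_{\tilde{p}}^{(e)}(\tilde{\mathcal{P}}) = T_{\tilde{p}}^{(\alpha)}(\tilde{\mathcal{P}})$, analogous to \eqref{e_space_equalto_alpha_space}; and (iii) the inner-product formula $\langle X,Y\rangle^{(\alpha)}_{\tilde{p}} = E_{\tilde{p}}[X^{(e)}Y^{(\alpha)}]$ from \eqref{eqn:alpha_metric_general}, all applied at $\tilde{p}$ rather than $p$.

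First, for an arbitrary tangent vector $X \in T_{\tilde{p}}(\tilde{\mathcal{P}}) = \mathcal{A}_0$, I would write $X(E_{\tilde{p}}[A]) = \sum_x X(x) A(x)$, factor $X(x) = \tilde{p}(x) X^{(e)}_{\tilde{p}}(x)$, and use $\sum_x X(x) = 0$ to subtract the constant $E_{\tilde{p}}[A]$ inside the sum, yielding $X(E_{\tilde{p}}[A]) = E_{\tilde{p}}[X^{(e)}_{\tilde{p}}(A - E_{\tilde{p}}[A])]$. This is the Bayesian counterpart of \eqref{eqn:tangent_acting_on_expectation}.

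Next, since $A - E_{\tilde{p}}[A] \in T_{\tilde{p}}^{(e)}(\tilde{\mathcal{P}}) = T_{\tilde{p}}^{(\alpha)}(\tilde{\mathcal{P}})$, there exists a tangent vector $Y \in T_{\tilde{p}}(\tilde{\mathcal{P}})$ with $Y^{(\alpha)}_{\tilde{p}} = A - E_{\tilde{p}}[A]$. Comparing with \eqref{eqn:differential_of_function_alpha} applied to $f = E[A]$ on $\tilde{\mathcal{P}}$, this $Y$ must coincide with $(\text{grad}\,E[A])_{\tilde{p}}$, so that by \eqref{eqn:norm_of_differential_alpha} we have
\begin{align*}
\|(\text{d}E_{\tilde{p}}[A])_{\tilde{p}}\|^2_{\tilde{p}}
= \langle Y, Y\rangle^{(\alpha)}_{\tilde{p}}
= E_{\tilde{p}}\bigl[Y^{(e)}_{\tilde{p}} \cdot Y^{(\alpha)}_{\tilde{p}}\bigr]
= E_{\tilde{p}}\bigl[Y^{(e)}_{\tilde{p}} (A - E_{\tilde{p}}[A])\bigr].
\end{align*}
Then I would substitute the $\alpha$-representation formula \eqref{eqn:alpha_rep_tgt_vec} for $\tilde{p}$, namely $Y^{(\alpha)}_{\tilde{p}} = (p^{(\alpha)}/\tilde{p})(Y^{(e)}_{\tilde{p}} - E_{p^{(\alpha)}}[Y^{(e)}_{\tilde{p}}])$, solve for $Y^{(e)}_{\tilde{p}}$ in terms of $A - E_{\tilde{p}}[A]$, and observe that the additive constant $E_{p^{(\alpha)}}[Y^{(e)}_{\tilde{p}}]$ drops out upon multiplication with $A - E_{\tilde{p}}[A]$ because $E_{\tilde{p}}[A - E_{\tilde{p}}[A]] = 0$. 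Changing the reference measure from $\tilde{p}$ to $p^{(\alpha)}$ via the ratio $\tilde{p}/p^{(\alpha)}$ then collapses the remaining expression to $\text{Var}_{p^{(\alpha)}}[(\tilde{p}/p^{(\alpha)})(A - E_{\tilde{p}}[A])]$, as required.

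The main obstacle is bookkeeping: keeping straight that $E_{\tilde{p}}[\,\cdot\,]$ is an integral against the (unnormalized) positive measure $\tilde{p}$, while $E_{p^{(\alpha)}}[\,\cdot\,]$ and the variance on the left-hand side are against the probability measure $p^{(\alpha)}$, and that the Bayesian $\alpha$-metric \eqref{eqn:alpha-metric-bayesian} carries the prior factor $\lambda(\theta)$ together with the extra $J^\lambda$ piece. Once the correct $\alpha$-representation at $\tilde{p}$ is written down so that the inner-product identity $\langle X,Y\rangle^{(\alpha)}_{\tilde{p}} = E_{\tilde{p}}[X^{(e)}Y^{(\alpha)}]$ is internally consistent with \eqref{eqn:alpha-metric-bayesian}, the remaining manipulations are essentially the same algebra used to close the proof of Theorem~\ref{thm:variance_and_norm_of_differential}.
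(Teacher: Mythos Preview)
Your proposal is exactly the approach the paper takes: the paper does not give an explicit proof of Theorem~\ref{thm:variance_and_norm_of_differential_alphaBayesian} but simply remarks that the result is ``analogous to Theorem~\ref{thm:variance_and_norm_of_differential} and Corollary~\ref{cor:variance_and_norm_of_differential_inequality} derived in Section~\ref{subsec:alphaCRLB} for $\mathcal{P}$,'' with $p$ replaced by $\tilde{p}$ throughout. Your three ingredients and the chain of computations you outline reproduce precisely that transplanted argument, and you correctly isolate the only real subtlety, namely the bookkeeping between the unnormalised measure $\tilde{p}$ and the probability measure $p^{(\alpha)}$ when writing down the $\alpha$-representation so that the pairing $\langle X,Y\rangle^{(\alpha)}_{\tilde{p}}=E_{\tilde{p}}[X^{(e)}Y^{(\alpha)}]$ reproduces the Bayesian metric \eqref{eqn:alpha-metric-bayesian}.
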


  \begin{corollary}\cite{mishra2020generalized}
    \label{cor:variance_and_norm_of_differential_inequality_alphaBayesian}
    If $\tilde{S}$ is a submanifold of $\mathcal{\tilde{P}}$, then
    \begin{align}
    \label{variance_and_norm_of_differential_alphaBayesian}
    \text{Var}_{{p}^{(\alpha)}}\left[\frac{\tilde{p}(X)}{p^{(\alpha)}(X)}(A-E_{\tilde{p}}[A])\right] \ge \|(\text{d}E[A]|_{S})_{\tilde{p}}\|_{\tilde{p}}^2
    \end{align}
    with equality if and only if $$A-E_{\tilde{p}}[A]\in \{X_{\tilde{p}}^{(\alpha)} : X\in T_{\tilde{p}}(S)\} =: T_{\tilde{p}}^{(\alpha)}(S).$$ 
  \end{corollary}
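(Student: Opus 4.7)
The plan is to derive this corollary from Theorem~\ref{thm:variance_and_norm_of_differential_alphaBayesian} by the standard observation that restricting a differential to a submanifold can only decrease its norm, exactly mirroring the passage from Theorem~\ref{thm:variance_and_norm_of_differential} to Corollary~\ref{cor:variance_and_norm_of_differential_inequality}. First I would note that since $\tilde{S}\subset\tilde{\mathcal{P}}$, every $X\in T_{\tilde{p}}(\tilde{S})$ sits inside $T_{\tilde{p}}(\tilde{\mathcal{P}})$, so the restricted differential $(\text{d}E[A]|_{S})_{\tilde{p}}$ is literally the restriction of $(\text{d}E[A])_{\tilde{p}}$ to the subspace $T_{\tilde{p}}(\tilde{S})$.

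Next I would apply the gradient/differential correspondence \eqref{eqn:differential_of_function_alpha}--\eqref{eqn:norm_of_differential_alpha}, now read with respect to the Bayesian $\alpha$-metric $G_\alpha^{\lambda}$. The gradient of $E[A]$ computed in $\tilde{S}$ is the orthogonal projection, with respect to $G_\alpha^{\lambda}$, of the ambient gradient computed in $\tilde{\mathcal{P}}$ onto $T_{\tilde{p}}(\tilde{S})$. Since orthogonal projection is norm-contracting, this yields
\begin{equation*}
\|(\text{d}E[A]|_{S})_{\tilde{p}}\|_{\tilde{p}}^2 \;\le\; \|(\text{d}E[A])_{\tilde{p}}\|_{\tilde{p}}^2,
\end{equation*}
and combining with the equality from Theorem~\ref{thm:variance_and_norm_of_differential_alphaBayesian} gives the claimed inequality \eqref{variance_and_norm_of_differential_alphaBayesian}. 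For the equality condition, I would observe that projection preserves norm iff the ambient gradient already lies in $T_{\tilde{p}}(\tilde{S})$. Unwinding the $\alpha$-representation as in the proof of Theorem~\ref{thm:variance_and_norm_of_differential_alphaBayesian}, the ambient gradient $Y$ satisfies $Y_{\tilde{p}}^{(\alpha)} = A - E_{\tilde{p}}[A]$, so $Y\in T_{\tilde{p}}(\tilde{S})$ iff $A - E_{\tilde{p}}[A]\in T_{\tilde{p}}^{(\alpha)}(\tilde{S})$, which is exactly the stated condition.

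The main obstacle I expect is bookkeeping rather than substance: one must verify that the $\alpha$-representation machinery from Section~\ref{subsec:alphaCRLB} — in particular the identity $T_{\tilde{p}}^{(e)}(\tilde{\mathcal{P}}) = T_{\tilde{p}}^{(\alpha)}(\tilde{\mathcal{P}})$ and the explicit formula \eqref{eqn:alpha_rep_tgt_vec} — extends to the positive-measure manifold $\tilde{\mathcal{P}}$ with the $\lambda$-weighted density $\tilde{p} = p_\theta \lambda(\theta)$, where the relevant escort is $p^{(\alpha)}$ rather than $\tilde{p}^{(\alpha)}$. This is why the LHS of \eqref{variance_and_norm_of_differential_alphaBayesian} carries the asymmetric-looking factor $\tilde{p}/p^{(\alpha)}$. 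Once the ambient identifications in $\tilde{\mathcal{P}}$ are established (as implicitly done in Section~\ref{sec:bcrlb} when $\tilde{\mathcal{P}}$ is treated as an affine subset of $\mathbb{R}^{\tilde{\mathbb{X}}}$), the projection step and the equality analysis proceed verbatim from the probability-measure case.
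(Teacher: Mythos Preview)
Your proposal is correct and matches the paper's approach: the paper does not give an explicit proof of this corollary but simply states it as the result ``analogous to Theorem~\ref{thm:variance_and_norm_of_differential} and Corollary~\ref{cor:variance_and_norm_of_differential_inequality},'' which is exactly the projection-of-the-gradient argument you outline. Your discussion of the bookkeeping needed to port the $\alpha$-representation machinery from $\mathcal{P}$ to $\tilde{\mathcal{P}}$ is in fact more explicit than anything the paper provides.
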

  
  We use the aforementioned ideas to establish a Bayesian $\alpha$-version of the CR inequality for the $\alpha$-escort of the underlying distribution. The following theorem gives a Bayesian lower bound for the variance of an estimator of $S^{(\alpha)}$ starting from an unbiased estimator of $S$.

\begin{theorem} [Bayesian $\alpha$-Cram\'{e}r-Rao inequality \cite{mishra2020generalized}]
  \label{thm:Bayesian_alpha_CRLB}
  Let $S = \{p_{\theta} : \theta = (\theta_1,\dots,\theta_m)\in\Theta\}$ be the given statistical model and let $\tilde{S}$ be as before. Let $\hat{\theta} = (\hat{\theta}_1,\dots,\hat{\theta}_m)$ be an unbiased estimator of $\theta = (\theta_1,\dots,\theta_m)$ for the statistical model $S$. Then
  \begin{equation}
  \label{eqn:Bayesian_alpha_cramerrao}
     \int \text{Var}_{\theta^{(\alpha)}}\left[\frac{\tilde{p_\theta}(X)}{p_\theta^{(\alpha)}(X)}(\hat{\theta}(X) - \theta)\right] d\theta
      \ge \left\{E_\lambda\big[G_\lambda^{(\alpha)}\big]\right\}^{-1},
  \end{equation}
  where $\theta^{(\alpha)}$ denotes expectation with respect to $p_{\theta}^{(\alpha)}$. 
  \end{theorem}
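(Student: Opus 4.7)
The plan is to mirror the proof of the Bayesian Cram\'er--Rao statement in Theorem~\ref{thm:main}(a), but replacing the non-escort ingredients with their $\alpha$-analogues: Corollary~\ref{cor:c_r_inequality} is replaced by Corollary~\ref{cor:variance_and_norm_of_differential_inequality_alphaBayesian}, and the Bayesian Fisher metric $G^{(I)}(\theta)$ is replaced by $G^{(I_\alpha)}(\theta)$ computed in (\ref{eqn:alpha-metric-bayesian}). The final step again invokes the matrix Jensen inequality of Groves and Rothenberg \cite{GrovesRothenberg1969Biometrika} to pass from $\mathbb{E}_\lambda[M^{-1}]$ to $(\mathbb{E}_\lambda[M])^{-1}$.

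First I would fix an arbitrary $c=(c_1,\ldots,c_m)\in\mathbb{R}^m$, set $A=\sum_i c_i \hat\theta_i$, and apply Corollary~\ref{cor:variance_and_norm_of_differential_inequality_alphaBayesian} pointwise at $\tilde p_\theta\in\tilde S$. Unbiasedness of $\hat\theta$ on $S$ lets me identify the centered quantity $A-E_{\tilde p_\theta}[A]$ with $c^T(\hat\theta-\theta)$ (under the same $\lambda(\theta)$-scaling convention already used in the proof of Theorem~\ref{thm:main}(a)), so the left-hand side of the corollary becomes $c^T\,\mathrm{Var}_{\theta^{(\alpha)}}\!\bigl[\tfrac{\tilde p_\theta(X)}{p_\theta^{(\alpha)}(X)}(\hat\theta(X)-\theta)\bigr]\,c$. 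The right-hand side, via the $\alpha$-analogue of (\ref{differential_and_metric_alpha}) developed in Section~\ref{subsec:alphaCRLB}, equals $c^T[G^{(I_\alpha)}(\theta)]^{-1}c$. Using the identity $G^{(I_\alpha)}(\theta)=\lambda(\theta)[G^{(\alpha)}(\theta)+J^\lambda(\theta)]=\lambda(\theta)\,G_\lambda^{(\alpha)}(\theta)$ established in (\ref{eqn:alpha-metric-bayesian}), this simplifies to $\lambda(\theta)^{-1}c^T[G_\lambda^{(\alpha)}(\theta)]^{-1}c$. Integrating the resulting pointwise bound against the prior (using the convention $\mathbb{E}_\lambda[\cdot]=\int\lambda(\theta)(\cdot)\,d\theta$ from Theorem~\ref{thm:main}) so that the $\lambda(\theta)^{\pm 1}$ factors telescope exactly as in the proof of Theorem~\ref{thm:main}(a), I arrive at $\int\mathrm{Var}_{\theta^{(\alpha)}}[\,\cdots\,]\,d\theta \ \ge\ \mathbb{E}_\lambda\!\bigl[[G_\lambda^{(\alpha)}(\theta)]^{-1}\bigr]$ in the positive semidefinite order. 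Since $G_\lambda^{(\alpha)}(\theta)=G^{(\alpha)}(\theta)+J^\lambda(\theta)$ is positive definite (sum of the positive definite $\alpha$-information metric and the positive semidefinite rank-one matrix $J^\lambda$), the Groves--Rothenberg inequality gives $\mathbb{E}_\lambda\!\bigl[[G_\lambda^{(\alpha)}]^{-1}\bigr]\ge\{\mathbb{E}_\lambda[G_\lambda^{(\alpha)}]\}^{-1}$, which chained with the previous bound yields the desired inequality; since $c$ is arbitrary, the scalar statement lifts to the stated matrix inequality.

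The main obstacle I anticipate is the careful bookkeeping of the $\lambda(\theta)$ factor, which enters simultaneously through the conversion $\tilde p_\theta=\lambda(\theta)p_\theta$ (affecting both the identification of $A-E_{\tilde p_\theta}[A]$ with $c^T(\hat\theta-\theta)$ and the appearance of $\tfrac{\tilde p_\theta}{p_\theta^{(\alpha)}}$ inside $\mathrm{Var}_{\theta^{(\alpha)}}$) and through the metric scaling $G^{(I_\alpha)}=\lambda\,G_\lambda^{(\alpha)}$. Making these two scalings cancel consistently is what forces the precise integration convention used on the right-hand side of the theorem and is the non-trivial part of the argument; once it is correctly handled, the Groves--Rothenberg step and the extraction of the matrix bound from the family of scalar bounds are standard.
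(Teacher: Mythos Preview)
Your plan is essentially the paper's own proof: apply Corollary~\ref{cor:variance_and_norm_of_differential_inequality_alphaBayesian} to $A=\sum_i c_i\hat\theta_i$, use the $\alpha$-analogue of (\ref{differential_and_metric_alpha}) together with $G^{(I_\alpha)}(\theta)=\lambda(\theta)G_\lambda^{(\alpha)}(\theta)$ to get a pointwise matrix bound, integrate in $\theta$, and finish with Groves--Rothenberg. One small correction to your bookkeeping: the paper integrates the pointwise inequality with respect to plain $d\theta$ (not against $\lambda(\theta)\,d\theta$), obtaining $\int\mathrm{Var}_{\theta^{(\alpha)}}[\cdots]\,d\theta \ge \int[\lambda(\theta)G_\lambda^{(\alpha)}(\theta)]^{-1}\,d\theta$, and then applies Groves--Rothenberg directly to $M(\theta)=\lambda(\theta)G_\lambda^{(\alpha)}(\theta)$ so that $\int M^{-1}\,d\theta\ge\bigl(\int M\,d\theta\bigr)^{-1}=\bigl(\mathbb{E}_\lambda[G_\lambda^{(\alpha)}]\bigr)^{-1}$; your stated intermediate step $\mathbb{E}_\lambda\bigl[[G_\lambda^{(\alpha)}]^{-1}\bigr]$ does not arise and the ``telescoping against the prior'' you describe would put an unwanted $\lambda(\theta)$ on the left-hand side.
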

\begin{proof}
Given an unbiased estimator $\hat{\theta}$ of $\theta$ for $\tilde{S}$, let $A = \sum\limits_{i=1}^m c_i \hat{\theta_i}$, for $c = (c_1,\dots,c_m)\in \mathbb{R}^m$.

Then, from (\ref{variance_and_norm_of_differential}) and
    \begin{align}
  \label{differential_and_metric_bayesianalpha}
  \|(\text{d}f)_{\tilde{p}}\|_{\tilde{p}}^2 = \sum\limits_{i,j} (g^{i,j})^{(\alpha)}\partial_j(f) \partial_i(f),
  \end{align}
we have
  \begin{align}
  \label{eq:analogous_cramer_rao_inequality1}
  c\text{Var}_{\theta^{(\alpha)}}\left[\frac{\tilde{p_\theta}(X)}{p_\theta^{(\alpha)}(X)}(\hat{\theta}(X) - \theta)\right]c^t \ge c\{\lambda(\theta)G^{(\alpha)}_\lambda\}^{-1}c^t.
  \end{align}
  Integrating the above over $\theta$, we get
  \begin{align}
  \label{eq:analogous_cramer_rao_inequality3}
  c\int \text{Var}_{\theta^{(\alpha)}}\left[\frac{\tilde{p_\theta}(X)}{p_\theta^{(\alpha)}(X)}(\hat{\theta}(X) - \theta)\right] d\theta~c^t\nonumber\\
    \ge c~ \int [\lambda(\theta)G^{(\alpha)}_\lambda]^{-1} d\theta ~ c^t.
  \end{align}
  But
  \begin{equation}
      \int [\lambda(\theta)G^{(\alpha)}_\lambda]^{-1} d\theta \ge \big\{\mathbb{E}_\lambda [G_\lambda^{(\alpha)}(\theta)]\big\}^{-1}
  \end{equation}
  by \cite{GrovesRothenberg1969Biometrika}. This proves the result.
\end{proof}

The above result reduces to the usual Bayesian Cramer-Rao inequality when $\alpha = 1$ as in \cite{kumar2018information}. When $\lambda$ is the uniform distribution, we obtain the $\alpha$-Cramer-Rao inequality as in \cite{kumar2020cram}. When $\alpha = 1$ and $\lambda$ is the uniform distribution, this yields the usual deterministic Cramer-Rao inequality.

\section{Information Geometry for Hybrid CR inequality}
\label{sec:hybrid}
Hybrid CR inequality is a special case of Bayesian CR inequality where part of the unknown parameters are deterministic and the rest are random. This was first encountered by Rockah in a specific application \cite{rockah1987arrayfar,rockah1987arraynear}. Further properties of hybrid CR inequality were studied, for example, in \cite{narasimhan1995fundamental,noam2009notes, messer2006hybrid}. 

Consider the setting in \ref{sec:bcrlb}. The unknown parameter $\theta$ is now concatenation of two vectors $\theta_1$ and $\theta_2$, that is, $\theta = [\theta_1^T, \theta_2^T]^T$, where $\theta_1$ is an $m$-dimensional vector of deterministic parameters and $\theta_2$ is an $n$-dimensional vector of random parameters. Since $\theta_1$ is deterministic, the prior distribution $\lambda(\theta)$ is independent of $\theta_1$. As a consequence, the entries $J_{i,j}^{\lambda}$ in \eqref{eqn:J_matrix_bayesian} corresponding to any of the components of $\theta_1$ vanish. The hybrid CR inequality takes a form that is same as the Bayesian one except that the $J^{\lambda}$ matrix in \eqref{eqn:Bayesian_alpha_cramerrao} now becomes
\begin{align}
\begin{pmatrix}
  \begin{matrix}
  0
  \end{matrix}
  & 0 \\
  0 &
  \begin{matrix}
  J^{\lambda}(\theta_2),
  \end{matrix}
\end{pmatrix}
\end{align}
where $J^{\lambda}(\theta_2)$ is the $J^{\lambda}$ matrix for the random parameter vector $\theta_2$. In a similar way, one obtains the hybrid $\alpha$-CR inequality from Theorem \ref{thm:Bayesian_alpha_CRLB}.

\section{Summary}
\label{sec:summary}
In this chapter, we discussed information-geometric characterizations of various divergence functions linking them to the classical $\alpha$-CRLB, generalized CRLB, Bayesian CRLB, Bayesian $\alpha$-CRLB, hybrid CRLB, and hybrid $\alpha$-CRLB (see Table~\ref{tbl:summary}). For the Bayesian CRLB, we exploited the definition of KL-divergence when the probability densities are not normalized. This is an improvement over Amari-Nagaoka framework \cite{amari2000methods} on information geometry which only dealt with the notion of deterministic classical CRLB. 

In particular, we formulated an analogous inequality from the generalized Csisz\'ar $f$-divergence. This result leads the usual CR inequality to its escort $F(p)$ by the transformation $p\mapsto F(p)$. Note that this reduction is not coincidental because the Riemannian metric derived from all Csisz\'ar $f$-divergences is the Fisher information metric and the divergence studied here is a Csisz\'ar $f$-divergence, not between $p$ and $q$, but between $F(p)$ and $F(q)$. The generalized version of the CR inequality enables us to find unbiased and efficient estimators for the escort of the underlying model. 

    \begin{table}
    \caption{Lower error bounds and corresponding information-geometric properties}
    \label{tbl:summary}       
    %
    %
    \begin{tabular}{p{3.1cm}P{1.3cm}P{1.6cm}P{3.5cm}}
    \hline\noalign{\smallskip}
    Bound & cf. Section & Divergence & Riemannian metric \\
    \noalign{\smallskip}
    \hline
    \noalign{\smallskip}
    Deterministic CRLB \cite{amari2000methods}  & \ref{subsec:intro_relent}, \ref{sec:desiderata} & ${I}(p,q)$ & $G^{(e)}(\theta)$ \\
    Bayesian CRLB \cite{kumar2018information} & \ref{sec:bcrlb} & $I(\tilde{p}_\theta\|\tilde{p}_{\theta'})$  & $\lambda(\theta)\big[G^{(e)}(\theta) + J^{\lambda}(\theta)\big]$  \\
    Hybrid CRLB & \ref{sec:hybrid} & $I(\tilde{p}_\theta\|\tilde{p}_{\theta'})$  & $\lambda(\theta)\big[G^{(e)}(\theta) + J^{\lambda}(\theta)\big]$ \\
    Barankin bound \cite{kumar2018information} & \ref{sec:bcrlb} & Not applicable  &  $g(\theta)$\\
    Deterministic $\alpha$-CRLB \cite{kumar2020cram} & \ref{subsec:analogous_cr_inequality} & $I_{\alpha}(p,q)$ & $G^{(\alpha)}(\theta)$\\
    General $(f, F)$-CRLB \cite{kumar2020cram} & \ref{subsec:general_framework} & $D_f^{(F)}(p,q)$  & $G^{(f, F)}(\theta)$\\
    Bayesian $\alpha$-CRLB \cite{mishra2020generalized} & \ref{sec:relalphaBayes} & $I_{\alpha}(\tilde{p}_{\theta},\tilde{p}_{\theta'})$
     & $\lambda(\theta)\big[G^{(\alpha)}(\theta) + J^{\lambda}(\theta)\big]$ \\
    Hybrid $\alpha$-CRLB & \ref{sec:hybrid} & $I_{\alpha}(\tilde{p}_{\theta},\tilde{p}_{\theta'})$
     & $\lambda(\theta)\big[G^{(\alpha)}(\theta) + J^{\lambda}(\theta)\big]$ \\
    \noalign{\smallskip}\hline\noalign{\smallskip}
    \end{tabular}
    \end{table}
    
Finally, using the general definition of $I_\alpha$-divergence in the Bayesian case, we derived Bayesian $\alpha$-CRLB and hybrid CRLB. These improvements enable usage of information-geometric approaches for biased estimators and noisy situations as in radar and communications problems \cite{mishra2017performance}. 

\section*{Acknowledgements}
The authors are sincerely grateful to the anonymous reviewers whose valuable comments greatly helped in improving the manuscript. K. V. M. acknowledges support from the National Academies of Sciences, Engineering, and Medicine via Army Research Laboratory Harry Diamond Distinguished Postdoctoral Fellowship.
  
  \appendix
\chapter{Other Generalizations of Cram\'er-Rao Inequality}
\label{app:general}
  Here we discuss commonalities of some of the earlier generalizations of CR inequality with the $\alpha$-CR inequality mentioned in Section~\ref{subsec:analogous_cr_inequality}.
  
  \begin{enumerate}
      \item Jan Naudts suggests an alternative generalization of the usual Cram\'er-Rao inequality in the context of Tsallis' thermostatistics \cite[Eq.~(2.5)]{naudts2004estimators}.
  Their inequality is closely analogous to ours. It enables us to find a bound for the variance of an estimator of the underlying model (\textit{with respect to the escort model}) in terms of a generalized Fisher information ($g_{k l}(\theta)$) involving both the underlying ($p_\theta$) and its escort families ($P_\theta$).
  Their Fisher information, when the escort is taken to be $P_\theta = p_\theta^{(\alpha)}$, is given by
  \begin{align*}
  g_{k,l}(\theta) = \sum_x \frac{1}{p_\theta^{(\alpha)}(x)} \partial_k p_\theta(x)\partial_l p_\theta(x).
  \end{align*}
  The same in our case is
  \begin{align*}
  g_{k,l}^{(\alpha)}(\theta) = \sum_x \frac{1}{p_\theta^{(\alpha)}(x)} \partial_k p_\theta^{(\alpha)}(x)\partial_l p_\theta^{(\alpha)}(x).
  \end{align*}
  Also, $\partial_i p_\theta^{(\alpha)}$ and $\partial_i p_\theta$ are related by 
  \begin{align*}
  \partial_i p_\theta^{(\alpha)}(x) = \partial_i\left(\frac{p_\theta(x)^\alpha}{\sum_y p_\theta(y)^\alpha}\right) &= \alpha\left[\frac{{p_{\theta}^{(\alpha)}(x)}}{p_{\theta}(x)}\partial_i p_{\theta}(x) \right.\nonumber\\
  &\hspace{8mm}\left.- p_{\theta}^{(\alpha)}(x) \sum_y \frac{{p_{\theta}^{(\alpha)}(y)}}{p_{\theta}(y)}\partial_i p_{\theta}(y)\right].
  \end{align*}
  Moreover, while theirs bounds the variance of an estimator of the \textit{true distribution} with respect to the escort distribution, ours bounds the variance of an estimator of the \textit{escort distribution itself}. Their result is precisely the following.
  
  \vspace{0.2cm}
\noindent
\textbf{Theorem 2.1 of Jan Naudts \cite{naudts2004estimators}} \textit{Let be given two families of pdfs $\left(p_{\theta}\right)_{\theta \in D}$ and $\left(P_{\theta}\right)_{\theta \in D}$ and corresponding expectations $E_{\theta}$ and $F_{\theta} .$ Let c be an estimator of $\left(p_{\theta}\right)_{\theta \in D},$ with scale function $F$. Assume that the regularity condition
\begin{align*}
F_{\theta} \frac{1}{P_{\theta}(x)} \frac{\partial}{\partial \theta^{k}} p_{\theta}(x)=0,
\end{align*}
holds. Let $g_{k l}(\theta)$ be the information matrix introduced before. Then, for all u and v in $\mathbb{R}^{n}$ is
\begin{align*}
\frac{u^{k} u^{l}\left[F_{\theta} c_{k} c_{l}-\left(F_{\theta} c_{k}\right)\left(F_{\theta} c_{l}\right)\right]}{\left[u^k v^l\frac{\partial^2}{ \partial\theta^l\partial\theta^k} F(\theta)\right]^{2}} \geq \frac{1}{v^{k} v^{l} g_{k l}(\theta)}.
\end{align*}
}
\vspace{0.2cm}

\item Furuichi \cite{furuichi2009on} defines a generalized Fisher information based on the $q$-logarithmic function and gives a bound for the variance of an estimator with respect to the escort distribution. Given a random variable $X$ with the probability density function $f(x)$, they define the $q$-score function $s_{q}(x)$ based on the $q$-logarithmic function and $q$-Fisher information $J_{q}(X) = E_{q}\left[s_{q}(X)^{2}\right]$,
where $E_{q}$ stands for expectation with respect to the escort distribution $f^{(q)}$ of $f$ as in \eqref{eqn:escort_distribution}. Observe that
\begin{align}
\label{eqn:furuchi_fisher}
    J_{q}(X) & = E_{q}\left[s_{q}(X)^{2}\right]\nonumber\\
    & = E_{q}\left[f(X)^{2-2q}\left(\frac{d}{dX}\log f(X)\right)^2\right],
\end{align}
whereas our Fisher information in this setup, following \eqref{eqn:g-alpha-expansion}, is
\begin{equation}
\label{eqn:q-fisher_information}
    g^{(q)}(X) = E_{q}\left[ \left(\frac{d}{dX}\log f(X)\right)^2\right] - \left(E_{q}\left[ \frac{d}{dX}\log f(X)\right]\right)^2,
\end{equation}
Interestingly, they also bound the variance of an estimator of the escort model with respect to the escort model itself as in our case. Their main result is the following.
\vspace{0.2cm}

\noindent
\textbf{Theorem 3.3 of Furuichi \cite{furuichi2009on}}: \textit{Given the random variable $X$ with the probability density function $p(x)$, the $q$-expectation value $\mu_{q} = E_{q}[X]$, and the $q$-variance $\sigma_{q}^{2} = E_{q}\left[\left(X-\mu_{q}\right)^{2}\right]$, we have a $q$-Cramér-Rao inequality
\begin{align*}
J_{q}(X) \geq \frac{1}{\sigma_{q}^{2}}\left(\frac{2}{\int p(x)^{q} d x}-1\right) \quad \text { for } q \in[0,1) \cup(1,3).
\end{align*}
Immediately, we have
\begin{align*}
J_{q}(X) \geq \frac{1}{\sigma_{q}^{2}} \quad \text { for } q \in(1,3).
\end{align*}
}
\vspace{0.2cm}

\item Lutwak et al. \cite{lutwak2005cramer} derives a Cram\'er-Rao inequality in connection with extending Stam's inequality for the generalized Gaussian densities. Their inequality finds lower bound for the $p$-th moment of the given density ($\sigma_p[f]$) in terms of a generalized Fisher information. Their Fisher information $\phi_{p, \lambda}[f]$, when specialised to $p=q=2$, is given by
\[
\phi_{2, \lambda}[f] = \left\{E\Big[f(X)^{2\lambda - 2}\Big(\frac{d}{dX}\log f(X)\Big)^2\Big]\right\}^{\frac{1}{2}},
\]
which is closely related to that of Furuichi's \eqref{eqn:furuchi_fisher} upto a change of measure $f\mapsto f^{(\lambda)}$, which, in turn, related to ours \eqref{eqn:q-fisher_information}. Moreover, while they use $I_\alpha$-divergence to derive their moment-entropy inequality, they do not do so while defining their Fisher information and hence obtain a different Cram\'er-Rao inequality. Their result is reproduced as follows.
\vspace{0.2cm}

\noindent
\textbf{Theorem 5 of Lutwak et al. \cite{lutwak2005cramer}}: \textit{Let $p \in[1, \infty], \lambda \in(1 /(1+p), \infty),$ and $f$ be a density. If $p<\infty,$ then $f$ is assumed to be absolutely continuous; if $p=\infty,$ then $f^{\lambda}$ is assumed to have bounded variation. If $\sigma_p[f], \phi_{p, \lambda}[f]<\infty,$ then
\begin{align*}
\sigma_p[f]\phi_{p, \lambda}[f] \geq \sigma_p[G]\phi_{p, \lambda}[G],
\end{align*}
where $G$ is the generalized Gaussian density.
}
\vspace{0.2cm}

\item Bercher \cite{bercher2012generalized} derived a two parameter extension of Fisher information and a generalized Cram\'{e}r-Rao inequality which bounds the $\alpha$ moment of an estimator. Their Fisher information, when specialised to $\alpha = \beta = 2$, reduces to
\begin{align*}
I_{2, q}[f ; \theta] = E_q\left[\frac{f^{(q)}(X ; \theta)}{f(x ; \theta)} \left(\frac{\partial}{\partial \theta} \log f^{(q)}(X ; \theta)\right)^2\right],
\end{align*}
where $E_q$ stands for expectation with respect to the escort distribution $f^{(q)}$. Whereas, following \eqref{eqn:RiemannianOnS-alpha}, our Fisher information in this setup is
\begin{align*}
    g^{(q)}(\theta) = \frac{1}{q^2} E_{q}\left[\left(\frac{\partial}{\partial \theta} \log f^{(q)}(X ; \theta)\right)^2\right].
\end{align*}
Thus our Fisher information differs from his by the factor ${f(x;\theta)}/{q^2 f^{(q)}(x,\theta)}$ inside the expectation. Note that $q$ in their result is analogous to $\alpha$ in our work. The main result of Bercher \cite{bercher2012generalized} is reproduced verbatim as follows.
\vspace{0.2cm}

\noindent
\textbf{Theorem 1 of Bercher \cite{bercher2012generalized}}: \textit{Let $f(x; \theta)$ be a univariate probability density function defined over a subset $X$ of $\mathbb{R},$ and $\theta \in \Theta$ a parameter of the density. Assume that $f(x ; \theta)$ is a jointly measurable function of $x$ and $\theta$, is integrable with respect to $x$, is absolutely continuous with respect to $\theta,$ and that the derivative with respect to $\theta$ is locally integrable. Assume also that
$q>0$ and that $M_{q}[f ; \theta]$ is finite. For any estimator $\hat{\theta}(x)$ of $\theta,$ we have
\begin{align*}
E\left[|\hat{\theta}(x)-\theta|^{\alpha}\right]^{\frac{1}{\alpha}} I_{\beta, q}[f ; \theta]^{\frac{1}{\beta}} \geq\left|1+\frac{\partial}{\partial \theta} E_{q}[\hat{\theta}(x)-\theta]\right|,
\end{align*}
with $\alpha$ and $\beta$ H\"{o}lder conjugates of each other, i.e., $\alpha^{-1}+\beta^{-1}=1, \alpha \geq 1,$ and where the quantity
\begin{align*}
I_{\beta, q}[f ; \theta] = E\left[\left|\frac{f(x ; \theta)^{q-1}}{M_{q}[f ; \theta]} \frac{\partial}{\partial \theta} \ln \left(\frac{f(x ; \theta)^{q}}{M_{q}[f ; \theta]}\right)\right|^{\beta}\right],
\end{align*}
where $M_{q}[f ; \theta] := \int f(x ; \theta)^{q}~dx$, is the generalized Fisher information of order $(\beta, q)$ on the parameter $\theta .$ 
}
\end{enumerate}
\Backmatter
%
%
%
%
%
%
\bibliographystyle{elsarticle-num}
\bibliography{book}  

\begin{thebibliography}{10}
\expandafter\ifx\csname url\endcsname\relax
  \def\url#1{\texttt{#1}}\fi
\expandafter\ifx\csname urlprefix\endcsname\relax\def\urlprefix{URL }\fi
\expandafter\ifx\csname href\endcsname\relax
  \def\href#1#2{#2} \def\path#1{#1}\fi

\bibitem{rao1945information}
C.~R. Rao, Information and the accuracy attainable in the estimation of
  statistical parameters, Bulletin of Calcutta Mathematical Society 37 (1945)
  81--91.

\bibitem{Efron1975curvature}
B.~Efron, Defining the curvature of a statistical problem (with applications to
  second order efficiency, The Annals of Statistics 3~(6) (1975) 1189--1242.

\bibitem{cencov1981statistical}
N.~N. Cencov, Statistical decision rules and optimal inference, no.~53 in
  Translations of Mathematical Monographs, American Mathematical Society, 1981.

\bibitem{amari1982curved}
S.~Amari, Differential geometry of curved exponential families-curvatures and
  information loss, The Annals of Statistics 10~(2) (1982) 357--385.

\bibitem{amari1985differential}
S.~Amari, Differential-geometrical methods in statistics, Lecture Notes on
  Statis- tics 28~(1).

\bibitem{amari2000methods}
S.~Amari, H.~Nagaoka, Methods of information geometry, Vol. 191, American
  Mathematical Society, Oxford University Press, 2000.

\bibitem{eguchi1992geometry}
S.~Eguchi, Geometry of minimum contrast, Hiroshima Mathematical Journal 22~(3)
  (1992) 631--647.

\bibitem{amari2016information}
S.~Amari, Information geometry and its applications, Springer, 2016.

\bibitem{ay2017information}
N.~Ay, J.~Jost, H.~V{\^a}n~L{\^e}, L.~Schwachh{\"o}fer, Information geometry,
  Springer, 2017.

\bibitem{ay2018information}
N.~Ay, P.~Gibilisco, F.~Matus, Information geometry and its applications,
  Springer Proceedings in Mathematics \& Statistics 252.

\bibitem{barndorff2014information}
O.~Barndorff-Nielsen, Information and exponential families in statistical
  theory, John Wiley \& Sons, 2014.

\bibitem{calin2014geometric}
O.~Calin, C.~Udri{\c{s}}te, Geometric modeling in probability and statistics,
  Springer, 2014.

\bibitem{kass2011geometrical}
R.~E. Kass, P.~W. Vos, Geometrical foundations of asymptotic inference, Vol.
  908, John Wiley \& Sons, 2011.

\bibitem{murray2017differential}
M.~K. Murray, J.~W. Rice, Differential geometry and statistics, Routledge,
  2017.

\bibitem{nielsen2021progress}
F.~Nielsen, Progress in information geometry: {T}heory and applications,
  Springer, 2021.

\bibitem{nielsen2013matrix}
F.~Nielsen, R.~Bhatia, Matrix information geometry, Springer, 2013.

\bibitem{nielsen2017computational}
F.~Nielsen, F.~Critchley, C.~T.~J. Dodson, Computational information geometry
  for image and signal processing, Springer, 2017.

\bibitem{balaji2014information}
B.~Balaji, F.~Barbaresco, A.~Decurninge, Information geometry and estimation of
  toeplitz covariance matrices, in: IEEE Radar Conference, 2014, pp. 1--4.

\bibitem{amari2013minkovskian}
S.~Amari, M.~Yukawa, Minkovskian gradient for sparse optimization, IEEE Journal
  of Selected Topics in Signal Processing 7~(4) (2013) 576--585.

\bibitem{amari1997information}
S.~Amari, Information geometry of neural networks: {A}n overview, in: S.~W.
  Ellacott, J.~C. Mason, I.~J. Anderson (Eds.), Mathematics of Neural Networks,
  Vol.~8 of Operations Research/Computer Science Interfaces Series, Springer
  US, 1997, pp. 15--23.

\bibitem{amari2002information}
S.~Amari, Information geometry of neural learning and belief propagation, in:
  IEEE International Conference on Neural Information Processing, Vol.~2, 2002,
  pp. 886--vol.

\bibitem{amari1998natural}
S.~Amari, Natural gradient works efficiently in learning, Neural computation
  10~(2) (1998) 251--276.

\bibitem{gangbo1996geometry}
W.~Gangbo, R.~J. McCann, The geometry of optimal transportation, Acta
  Mathematica 177~(2) (1996) 113--161.

\bibitem{grasselli2001uniqueness}
M.~R. Grasselli, R.~F. Streater, On the uniqueness of the {C}hentsov metric in
  quantum information geometry, Infinite Dimensional Analysis, Quantum
  Probability and Related Topics 4~(02) (2001) 173--182.

\bibitem{de2014design}
E.~de~Jong, R.~Pribi{\'c}, Design of radar grid cells with constant information
  distance, in: IEEE Radar Conference, 2014, pp. 1--5.

\bibitem{barbaresco2008innovative}
F.~Barbaresco, Innovative tools for radar signal processing based on
  {C}artan’s geometry of {SPD} matrices \& information geometry, in: IEEE
  Radar Conference, 2008, pp. 1--6.

\bibitem{coutino2016direction}
M.~Coutino, R.~Pribi{\'c}, G.~Leus, Direction of arrival estimation based on
  information geometry, in: IEEE International Conference on Acoustics, Speech
  and Signal Processing, 2016, pp. 3066--3070.

\bibitem{maybank2012fisher}
S.~J. Maybank, S.~Ieng, R.~Benosman, A {Fisher-Rao metric} for paracatadioptric
  images of lines, International Journal of Computer Vision 99~(2) (2012)
  147--165.

\bibitem{barbaresco2014koszul}
F.~Barbaresco, Koszul information geometry and {S}ouriau geometric
  temperature/capacity of {L}ie group thermodynamics, Entropy 16~(8) (2014)
  4521--4565.

\bibitem{barbaresco2016geometric}
F.~Barbaresco, Geometric theory of heat from {S}ouriau {L}ie groups
  thermodynamics and {K}oszul {H}essian geometry: {A}pplications in information
  geometry for exponential families, Entropy 18~(11) (2016) 386.

\bibitem{desjardins2015natural}
G.~Desjardins, K.~Simonyan, R.~Pascanu, et~al., Natural neural networks, in:
  Advances in Neural Information Processing Systems, 2015, pp. 2071--2079.

\bibitem{roux2008topmoumoute}
N.~L. Roux, P.-A. Manzagol, Y.~Bengio, Topmoumoute online natural gradient
  algorithm, in: Advances in neural information processing systems, 2008, pp.
  849--856.

\bibitem{kurose1994flat}
T.~Kurose, On the divergences of 1-conformally flat statistical manifolds,
  Tohoku Mathematical Journal 46~(3) (1994) 427--433.

\bibitem{matsuzoe1998flat}
H.~Matsuzoe, On realization of conformally-projectively flat statistical
  manifolds and the divergences, Hokkaido Mathematical Journal 27~(2) (1998)
  409--421.

\bibitem{csiszar2004information}
I.~Csisz\'{a}r, P.~Shields, Information theory and statistics: {A} tutorial,
  Vol. 1.4 of Foundations and Trends in Communications and Information Theory,
  Now Publishers, Inc, Hanover, USA, 2004.

\bibitem{CsiszarM12J}
I.~Csisz\'{a}r, F.~Mat\'u\v{s}, Generalized minimizers of convex integral
  functionals, bergman distance, pythagorean identities, Kybernetika (Prague).
  48 (2012) 637--689.

\bibitem{EguchiKO14J}
S.~Eguchi, O.~Komori, A.~Ohara, Duality of maximum entropy and minimum
  divergence, Entropy 16~(7) (2014) 3552--3572.

\bibitem{kumar2015minimization-1}
M.~A. Kumar, R.~Sundaresan, Minimization problems based on relative
  $\alpha$-entropy {I}: {F}orward projection, IEEE Transactions on Information
  Theory 61~(9) (2015) 5063--5080.

\bibitem{renyi1961measures}
A.~R{\'e}nyi, et~al., On measures of entropy and information, in: Proceedings
  of the Fourth Berkeley Symposium on Mathematical Statistics and Probability,
  Volume 1: Contributions to the Theory of Statistics, 1961, p. 547–561.

\bibitem{campbell1965coding}
L.~L. Campbell, A coding theorem and {R}\'{e}nyi's entropy, Information and
  Control 8 (1965) 423--429.

\bibitem{blumer1988renyi}
A.~C. Blumer, R.~J. McEliece, The {R}\'enyi redundancy of generalized {H}uffman
  codes, IEEE Transactions on Information Theory 34~(5) (1988) 1242--1249.

\bibitem{sundaresan2007guessing}
R.~Sundaresan, Guessing under source uncertainty, IEEE Transactions on
  Information Theory 53~(1) (2007) 269--287.

\bibitem{200206ISIT_Sun}
R.~Sundaresan, A measure of discrimination and its geometric properties, in:
  Proc. of the 2002 IEEE International Symposium on Information Theory,
  Lausanne, Switzerland, 2002, p. 264.

\bibitem{arikan1996inequality}
E.~Ar{\i}kan, An inequality on guessing and its application to sequential
  decoding, IEEE Transactions on Information Theory 42~(1) (1996) 99--105.

\bibitem{huleihel2017guessing}
W.~{Huleihel}, S.~{Salamatian}, M.~{M\'edard}, Guessing with limited memory,
  in: IEEE International Symposium on Information Theory, 2017, pp. 2253--2257.

\bibitem{bunte2014codes}
C.~Bunte, A.~Lapidoth, Codes for tasks and {R}\'enyi entropy, IEEE Transactions
  on Information Theory 60~(9) (2014) 5065--5076.

\bibitem{jones2001comparison}
M.~C. Jones, N.~L. Hjort, I.~R. Harris, A.~Basu, A comparison of related
  density based minimum divergence estimators, Biometrika 88~(3) (2001)
  865--873.

\bibitem{kumar2015minimization-2}
M.~A. Kumar, R.~Sundaresan, Minimization problems based on relative
  $\alpha$-entropy {II}: {R}everse projection, IEEE Transactions on Information
  Theory 61~(9) (2015) 5081--5095.

\bibitem{fujisawa2008robust}
H.~Fujisawa, S.~Eguchi, Robust parameter estimation with a small bias against
  heavy contamination, Journal of Multivariate Analysis 99 (2008) 2053--2081.

\bibitem{cichocki2010families}
A.~Cichocki, S.~Amari, Families of alpha- beta- and gamma- divergences:
  {F}lexible and robust measures of similarities, Entropy 12 (2010) 1532--1568.

\bibitem{notsu2014spontaneous}
A.~Notsu, O.~Komori, S.~Eguchi, Spontaneous clustering via minimum
  gamma-divergence, Neural Computation 26~(2) (2014) 421--448.

\bibitem{eguchi2011projective}
S.~Eguchi, O.~Komori, S.~Kato, Projective power entropy and maximum {T}sallis
  entropy distributions, Entropy 13~(10) (2011) 1746--1764.

\bibitem{eguchi2010entropy}
S.~Eguchi, S.~Kato, Entropy and divergence associated with power function and
  the statistical application, Entropy 12~(2) (2010) 262--274.

\bibitem{basu2011statistical}
A.~Basu, H.~Shioya, C.~Park, Statistical inference: {T}he minimum distance
  approach, Monographs on Statistics and Applied Probability, Chapman \&
  Hall/CRC Press, 2011.

\bibitem{kumar2018information}
M.~A. Kumar, K.~V. Mishra, Information geometric approach to {B}ayesian lower
  error bounds, in: IEEE International Symposium on Information Theory, 2018,
  pp. 746--750.

\bibitem{Pistone1995Annals}
C.~S.~G. Pistone, An infinite-dimensional geometric structure on the space of
  all the probability measures equivalent to a given one, Annals of Statistics
  23~(5) (1995) 1543--1561.

\bibitem{Pistone2007Annals}
C.~S.~G. Pistone, Exponential statistical manifold, Annals of Institute of
  Statistical Mathe- matics 59 (2007) 27--56.

\bibitem{Amari2021Information}
S.~Amari, Information geometry, Japanese Journal of Mathematics 16 (2021)
  1--48.

\bibitem{zhang2004divergence}
J.~Zhang, Divergence function, duality, and convex analysis, Neural Computation
  16~(1) (2004) 159--195.

\bibitem{tsallis1998role}
C.~Tsallis, R.~S. Mendes, A.~R. Plastino, The role of constraints within
  generalized nonextensive statistics, Physica A 261 (1998) 534--554.

\bibitem{karthik2018on}
P.~N. Karthik, R.~Sundaresan, On the equivalence of projections in relative
  $\alpha$-entropy and {R}\'enyi divergence, in: National Conference on
  Communications, 2018, pp. 1--6.

\bibitem{csiszar1991why}
I.~Csisz\'{a}r, Why least squares and maximum entropy? {A}n axiomatic approach
  to inference for linear inverse problems, The Annals of Statistics 19~(4)
  (1991) 2032--2066.

\bibitem{vanerven2014renyi}
T.~van Erven, P.~Harremo\"{e}s, R\'enyi divergence and {K}ullback-{L}eibler
  divergence, IEEE Transactions on Information Theory 60~(7) (2014) 3797--3820.

\bibitem{braunstein1994statistical}
S.~L. Braunstein, C.~M. Caves, Statistical distance and the geometry of quantum
  states, Physical Review Letters 72~(22) (1994) 3439.

\bibitem{petz1996monotone}
D.~Petz, Monotone metrics on matrix spaces, Linear algebra and its applications
  244 (1996) 81--96.

\bibitem{petz2007quantum}
D.~Petz, Quantum information theory and quantum statistics, Springer Science \&
  Business Media, 2007.

\bibitem{liu2019quantum}
J.~Liu, H.~Yuan, X.-M. Lu, X.~Wang, Quantum fisher information matrix and
  multiparameter estimation, Journal of Physics A: Mathematical and Theoretical
  53~(2) (2019) 023001.

\bibitem{li2018quantum}
T.~Li, X.~Wu, Quantum query complexity of entropy estimation, IEEE Transactions
  on Information Theory 65~(5) (2018) 2899--2921.

\bibitem{spivak2005comprehensive}
M.~Spivak, A comprehensive introduction to differential geometry - Volume I,
  Publish or Perish Inc., 2005.

\bibitem{jost2005riemannian}
J.~Jost, Riemannian geometry and geometric analysis, Springer, 2005.

\bibitem{gallot2004riemannian}
S.~Gallot, D.~Hulin, J.~Lafontaine, Riemannian geometry, Springer, 2004.

\bibitem{docarmo1976differential}
M.~P. Do~Carmo, Differential geometry of curves and surfaces, Prentice-Hall,
  1976.

\bibitem{kumar2020cram}
M.~A. Kumar, K.~V. Mishra, Cram{\'e}r-rao lower bounds arising from generalized
  csisz{\'a}r divergences, Information Geometry 3~(1) (2020) 33--59.

\bibitem{amari2010information}
S.~Amari, A.~Cichocki, Information geometry of divergence functions, Bulletin
  of the Polish Academy of Sciences: Technical Sciences 58~(1) (2010) 183--195.

\bibitem{GrovesRothenberg1969Biometrika}
T.~Groves, T.~Rothenberg, A note on the expected value of an inverse matrix,
  Biometrika 56 (1969) 690--691.

\bibitem{mishra2020generalized}
K.~V. Mishra, M.~A. Kumar, Generalized bayesian cram{\'e}r-rao inequality via
  information geometry of relative $\alpha$-entropy, in: IEEE Annual Conference
  on Information Sciences and Systems, 2020, pp. 1--6.

\bibitem{rockah1987arrayfar}
Y.~Rockah, P.~Schultheiss, Array shape calibration using sources in unknown
  locations - {P}art {I}: {F}ar-field sources, IEEE Transactions on Acoustics,
  Speech, and Signal Processing 35~(3) (1987) 286--299.

\bibitem{rockah1987arraynear}
Y.~Rockah, P.~Schultheiss, Array shape calibration using sources in unknown
  locations - {P}art {II}: {N}ear-field sources and estimator implementation,
  IEEE Transactions on Acoustics, Speech, and Signal Processing 35~(6) (1987)
  724--735.

\bibitem{narasimhan1995fundamental}
S.~Narasimhan, J.~L. Krolik, Fundamental limits on acoustic source range
  estimation performance in uncertain ocean channels, The Journal of the
  Acoustical Society of America 97~(1) (1995) 215--226.

\bibitem{noam2009notes}
Y.~Noam, H.~Messer, Notes on the tightness of the hybrid {C}ram{\'e}r-{R}ao
  lower bound, IEEE Transactions on Signal Processing 57~(6) (2009) 2074--2084.

\bibitem{messer2006hybrid}
H.~Messer, The hybrid {C}ram\'{e}r-{R}ao lower bound - from practice to theory,
  in: IEEE Workshop on Sensor Array and Multichannel Processing, 2006, pp.
  304--307.

\bibitem{mishra2017performance}
K.~V. Mishra, Y.~C. Eldar, Performance of time delay estimation in a cognitive
  radar, in: IEEE International Conference on Acoustics, Speech and Signal
  Processing, 2017, pp. 3141--3145.

\bibitem{naudts2004estimators}
J.~Naudts, Estimators, escort probabilities, and $\phi$-exponential families in
  statistical physics, Journal of Inequalities in Pure and Applied Mathematics
  5~(4) (2004) 1--15.

\bibitem{furuichi2009on}
S.~Furuichi, On the maximum entropy principle and the minimization of the
  {F}isher information in {T}sallis statistics, Journal of Mathematical Physics
  50~(013303) (2009) 1--12.

\bibitem{lutwak2005cramer}
E.~Lutwak, D.~Yang, G.~Zhang, Cram\'er-{R}ao and moment-entropy inequalities
  for {R}\'{e}nyi entropy and generalized {F}isher information, IEEE
  Transactions on Information Theory 51~(1) (2005) 473--478.

\bibitem{bercher2012generalized}
J.-F. Bercher, On generalized {C}ram{\'e}r-{R}ao inequalities, generalized
  {F}isher information and characterizations of generalized q-{G}aussian
  distributions, Journal of Physics A: Mathematical and Theoretical 45~(25)
  (2012) 255303.

\end{thebibliography}
%
%
%
%
%
%
%
%
\end{document}